\newtheorem{theorem}{Theorem}
\begin{document}
	
	\title{Adaptive Target Detection for FDA-MIMO Radar with Training Data in Gaussian noise}
	
	\author{Ping Li,
		Bang~Huang,~\IEEEmembership{Graduate Student Member,~IEEE,}
		Wen-Qin~Wang$^*$~\IEEEmembership{Senior Member,~IEEE}
		\thanks{This work was supported by National Natural Science Foundation of
			China 62171092. (Corresponding author: Wen-Qin Wang)
		}
		\thanks{Ping, Li, Bang, Huang and Wen-Qin, Wang are with School of Information and Communication Engineering, University of Electronic Science and Technology of China, Chengdu, 611731, P. R. China. Wen-Qin Wang is also with the Yangtze Delta Region Institute (Huzhou), University of Electronic Science and Technology of China, Huzhou, 313001, P. R. China. (Emails:202122220102@std.uestc.edu.cn; huangbang@std.uestc.edu.cn; wqwang@uestc.edu.cn)}
	}

	\maketitle
	\begin{abstract}
		This paper addresses the problem of detecting a moving target embedded in Gaussian noise with an unknown covariance matrix for frequency diverse array multiple-input multiple-output (FDA-MIMO) radar. To end it, assume that obtaining a set of training data is available. Moreover, we propose three adaptive detectors in accordance with the one-step generalized likelihood ratio test (GLRT), two-step GLRT, and Rao criteria, namely OGLRT, TGLRT, and Rao. The LH adaptive matched filter (LHAMF) detector is also introduced when decomposing the Rao test. Next, all provided detectors have constant false alarm rate (CFAR) properties against the covariance matrix. Besides, the closed-form expressions for false alarm probability (PFA) and detection probability (PD) are derived. Finally, this paper substantiates the correctness of the aforementioned algorithms through numerical simulations.
	\end{abstract}
	
	\begin{IEEEkeywords}
		Frequency diverse array multiple-input multiple-output (FDA-MIMO), detection, GLRT, Rao, AMF, mismatch.
	\end{IEEEkeywords}

	%
	\IEEEpeerreviewmaketitle

	\section{Introduction}
	\IEEEPARstart{F}{requency} diverse array (FDA) radar, as introduced by Antonik et al. \cite{Antonik2006Frequency}, employs diverse transmitting carrier frequencies across adjacent elements, resulting in a range-angle-time-dependent beampattern \cite{Wang2013Range,Lang2022Lamb,Gui2022Generalized} and even a time-varying radar cross-section (RCS) \cite{Huang2022RCS}. While some radar scholars have focused on designing the transmit beampattern \cite{Liao2020Frequency,Liao2022ALow,Basit2019Adaptive}, various applications for FDA radar have emerged, such as low probability of intercept (LPI) \cite{Wang2021LPI}, lamb wave sensing \cite{Lang2022Lamb}, and others \cite{Hu2022Adaptive,Nusenu2022Power}. However, the time-varying and range-angle coupling beampattern of FDA radar make it difficult to utilize for effective target detection and localization \cite{huang2021adaptive,lan2020glrt,XuLiaoRobust17}. In response to this, scholars have proposed combining multiple-input multiple-output (MIMO) radar \cite{fishler2004mimo,JianLiMIMORadar2009} with FDA radar technology to form FDA-MIMO radar \cite{Xu2015Deceptive,XiongWangFDA18,TanMing2021Correction}. It is worth noting that FDA-MIMO radar proves effective in eliminating time-varying effects and decoupling range and angle dependencies in the beampattern \cite{TanMing2021Correction,LanLan2020Suppression,Xu2015JointRange}. Therefore, this paper primarily delves into the adaptive target detection technology enabled by FDA-MIMO radar.
	
	Moreover, \cite{Gui2022Generalized} analyzed the generalized ambiguity function (GAF) of FDA and concluded that FDA-MIMO had been endowed with the abilities, namely super range resolution and resolving the Doppler ambiguity. Further, the nature of super-resolution in the range domain has further validated the effectiveness of what FDA-MIMO radar can suppress the main-lobe deceptive jamming/clutter \cite{lan2020glrt,Xu2015JointRange,ChengJie2021Joint}. The reason is that FDA-MIMO can distinguish the echoes from different ranges \cite{lan2020glrt}. On the other side,  Wan \cite{Wan2022Resolving} utilized the characteristic of multiple frequencies of FDA-MIMO radar to resolve the Doppler ambiguity for high-speed moving target. Meanwhile, there also arise applications in other fields, such as high-resolution and wide-swath synthetic aperture radar (SAR) imaging \cite{Zhang2022High}, SAR ground moving target indication (GMTI) \cite{HuangLi2022Frequency}, deceptive jamming against SAR \cite{Huang2021FDABased} and etc \cite{Jian2022physical,Ji2019On,Wang2021Clutter}.

	The range-dependent property of FDA-MIMO radar makes obtaining homogenous training data in clutter challenging. Therefore, some studies have designed target detectors for FDA-MIMO radar without needing training data\cite{huang2021adaptive,huang2022adaptive}. Specifically, Huang \textit{et al}. \cite{huang2021adaptive} presented the detectors resorting to Rao and Wald criteria, respectively. Furthermore, leveraging the super-resolution characteristics of FDA radar in range dimension \cite{gui2021fda}, the same authors investigated the problem of distributed targets spanning one primary range cell and multiple sub-range cells \cite{Huang2022AdaptiveBa}. Nevertheless, determining the detection threshold for the detectors above can be challenging in practical scenarios. One approach is to derive the detection threshold from an extensive historical data set. 
	However, when historical data fails to represent the current environment adequately, these detectors may not function optimally.
	
	Considering an FDA-MIMO radar system operating in a noise environment, including receiver thermal noise, jammers, and clutter after secondary range compensation \cite{xu2014space, xu2015range, xu2016space}, homogeneous training data can be obtained from adjacent cells to estimate the unknown covariance matrix.  
	Hence, to detect the target embedded in Gaussian interference, Lan \textit{et al.}
	employed semidefinite programming (SDP) to determine the optimal frequency offset for detection, and  developed a one-step GLRT (OGLRT) detector \cite{lan2020glrt}. Further,
	based on the two-step GLRT (TGLRT) and model order selection (MOS) criteria, \cite{Zhu2023Target} presented a detector that can simultaneously achieve target detection and recognition.  
	Zeng \textit{et al.} \cite{Zeng2022GLRTbasedDF} investigated the moving target detection problem by considering the situations with known and unknown Doppler shifts, respectively, according to the OGLRT criterion. It is worth noting that there is an error in the analysis of the statistical characteristics of the detectors in \cite{Zeng2022GLRTbasedDF}, and this error will be rectified in this paper.  
	Meanwhile, to the best of author's knowledge, there were no relevant studies based on TGLRT and Rao criteria to study the detection problem provided in \cite{Zeng2022GLRTbasedDF}. Nevertheless, previous excellent published works  \cite{liu2014Parametric,de2007rao,de2004new,robey1992cfar,liu2019robust} indicated that, in comparison to the OGLRT, both the TGLRT and Rao detectors exhibit superior detection performance, with a notable emphasis on robustness and selectivity for mismatched signals.
	These observations have motivated us to conduct a more in-depth investigation into this issue. 
	
	
	Moreover, the detection problem investigated in this paper is a specific form of the detection model introduced in \cite{Liu2014AdaptiveHE,kelly1989adaptive}. Differently, this study's unique application of the model to the FDA-MIMO radar system distinguishes it. Furthermore, this paper goes beyond merely analysing the proposed detectors' statistical properties, presenting closed-form expressions for the probability of false alarm (PFA) and probability of detection (PD). Additionally, this study delves into the implications of Doppler dimension mismatch through thorough numerical simulation. 
	
	This paper addresses the challenge of detecting a moving target embedded in Gaussian noise with an unknown covariance matrix for FDA-MIMO radar. The main contributions are summarized as follows:
	\begin{itemize}
		\item 
		When training data is available, we design three adaptive detectors to detect a moving target embedded in an unknown Gaussian noise (includes thermal noise, jammers, and clutter after secondary range compensation)—specifically, OGLRT, TGLRT, and Rao. Additionally, an LHAMF detector is proposed when decomposing the Rao detector.
		\item  A comprehensive statistical analysis is conducted for the proposed detectors, providing closed-form expressions for the PFA and PD for four detectors. We also demonstrate their constant false alarm rate (CFAR) properties agaisnt covariance matrix under null hypothesis.  
		\item Simulation results highlight the validness of proposed methods. OGLRT and LHAMF excel in scenarios with no signal mismatch, while TGLRT demonstrates robustness, and the Rao detector exhibits optimal selectivity for mismatched signal situations. Moreover, FDA-MIMO consistently outperforms traditional MIMO in detecting target operating in the presence of Gaussian noise plus mainlobe detecptive jamming.
	\end{itemize}
	
	The subsequent structure of the paper is organized as follows. In Section \ref{sec2}, we formulate the problem of detecting a moving target as a binary hypothesis. Following that, in Section \ref{sec3}, three detectors are designed. Section \ref{sec4} provides a detailed analysis of the statistical characteristics of the proposed detectors. Section \ref{sec5} gives the simulation results, and finally, the conclusion is provided in Section \ref{sec6}.

	\textbf{\textit{Notations:}} 	
	Matrices (vectors) are defined by boldface letters. $\mathbb{C} ^{M\times N}$  is a complex matrix with dimension of $M\times N$. The symbol  $\left( \cdot \right)^{\dagger}$, $\left( \cdot \right) ^T$  and  $\left( \cdot \right)^*$ denote conjugate transpose,  transpose and complex conjugate, respectively. $\sim $ means “be distributed as”, and $C\mathcal{F}$	denotes F-distribution. $\mathcal{C} \mathcal{N} $, $CW$ and $C\beta$ represents the Gaussian, Wishart and  Beta distribution, repectively.  Hadamard product and Kronecker product are denoted by $\odot $ and $\otimes $, respectively. $\mathrm{Tr}\left[ \cdot \right] $ and $\det \left[ \cdot \right] $ represent the trace and determinant of the matrix, respectively. 
	$\ln \left( \cdot \right) $  and  $\partial \left( \cdot \right) $ denote the logarithm function and gradient operator, respectively.
	$vec\left[ \cdot \right]$ represents the vectorization operation, and $E\left( \cdot \right) $ is the statistical expectation. $\mathbf{I}$ and $\mathrm{cont}.$ denote an identity matrix with appropriate dimension and a constant. $
	\mathbf{P}_{\mathbf{B}}=\mathbf{B}\left( \mathbf{B}^{\mathbf{\dagger }}\mathbf{B} \right)^{-\mathbf{1}}\mathbf{B}^{\mathbf{\dagger }}$  is an orthogonal projection matrix onto the subspace spanned by the columns of $\mathbf{B}$, and $\mathbf{P}_{\mathbf{B}}^{\bot}=\mathbf{I}-\mathbf{P}_{\mathbf{B}}$. $\mathrm{Re}\left[ \cdot \right] $
	and $ \mathrm{Im}\left[ \cdot \right] $ represent the real and imaginary parts of a complex number, respectively. $
	\left[ \mathbf{B} \right] _k
	$ denotes the $k$th column of matrix $\mathbf{B}$.

	\section{Problem Formulation}
	\label{sec2}
	Assuming that there are $M$ FDA transmit and $N$ phased-array (PA) receive elements in a colocated FDA-MIMO radar system. Regarding the first element as the reference element, the carrier frequency of $m$th elemant can be denoted by
	\begin{equation}
		\label{1002}
		\begin{split}
			f_m=f_0+\Delta f_m, m=1,2,...,M
		\end{split},
	\end{equation}
	where $f_0$ is the carrier frequency of first elemant and $\Delta f_m=(m-1)\Delta f$ is the frequency offset between the first and $m$th elemant. Under the far-field assumption, consider a point-like target  with a radial velocity $v$, located at the distance of $r$ and the angle of $\theta $. Then, the received signal \cite{gui2017coherent, GUI2020102861, huang2021adaptive} at $k$th ($k=1,...,K$) snapshot can be expressed as 
	\begin{equation}
		\label{1003}
		\begin{split}
			\mathbf{z}_k=\xi \omega _{k}^{}\left( f_{\mathrm{d}} \right) \mathbf{a}_{\mathrm{TR}}^{}\left( r,\theta \right) +\mathbf{n}_k\in \mathbb{C} ^{MN},
		\end{split}
	\end{equation}
	where $\xi$ encompass both the target and channel effects, is unknown.  $\omega _{k}^{}\left( f_{\mathrm{d}} \right) =e^{j2\pi f_{\mathrm{d}}t}$ is the Doppler shift caused by the moving target, with $f_{\mathrm{d}}=\frac{2v}{c}f_0T_p$.  $c$ and $T_p$ represent the light speed and pulse repetition interval (PRI), respectively. Moreover,  $\mathbf{a}_{\mathrm{TR}}^{}\left( r,\theta \right) $ denotes the joint transmitting-receiving steering vector, defined as
	\begin{equation}
		\label{1004}
		\begin{split}
			\mathbf{a}_{\mathrm{TR}}^{}\left( r,\theta \right) =\mathbf{a}_{\mathrm{T}}^{}\left( r,\theta \right) \otimes \mathbf{a}_{\mathrm{R}}^{}\left( \theta \right) \in \mathbb{C} ^{MN},
		\end{split}
	\end{equation}
	where $\mathbf{a}_{\mathrm{R}}^{}\left( \theta \right) $  represents receiving steering vector, given by
	\begin{equation}
		\label{1005}
		\begin{split}
			\mathbf{a}_{\mathrm{R}}^{}\left( \theta \right) =\left[ 1, e^{j2\pi \frac{d_{\mathrm{R}}\sin \theta}{\lambda _0}},..., e^{j2\pi \left( M-1 \right) \frac{d_{\mathrm{R}}\sin \theta}{\lambda _0}} \right] \in \mathbb{C} ^N ,
		\end{split}
	\end{equation}
	and $\mathbf{a}_{\mathrm{T}}^{}\left( r,\theta \right) $ is the transmitting steering vector, with
	\begin{equation}
		\label{1006}
		\begin{split}
			\mathbf{a}_{\mathrm{T}}^{}\left( r,\theta \right) =\mathbf{a}_{t}^{}\left( \theta \right) \odot \mathbf{e}\left( -r \right) ,
		\end{split}
	\end{equation}
	where
	\begin{align}
		\label{1007}
		\begin{split}
			\mathbf{a}_{t}^{}\left(\theta \right) &=\left[ 1, e^{j2\pi \frac{d_{\mathrm{T}}\sin \theta}{\lambda _0}},..., e^{j2\pi \left( M-1 \right) \frac{d_{\mathrm{T}}\sin \theta}{\lambda _0}} \right] \in \mathbb{C} ^M, \\
			\mathbf{e}\left( r \right) &=\left[ 1,e^{j2\pi \Delta f\frac{2r}{c}},...,e^{j2\pi \left( M-1 \right) \Delta f\frac{2r}{c}} \right] \in \mathbb{C} ^M. 
		\end{split}
	\end{align}
	The symbol	$\frac{2r}{c}$ is the time delay for the signal to travel from the first transmitting element to the target and back to the first receiving element. $d_{\mathrm{T}}$ and $d_{\mathrm{R}}$ are the distance between transmit and recceive element, respectively. 
	
	Hence, with $K$ snapshots, the received signal can be expressed as
	\begin{equation}
		\label{1009}
		\begin{split}
			\mathbf{Z}=\xi \mathbf{a}_{\mathrm{TR}}^{}\left( r,\theta \right) \boldsymbol{\omega }_{\mathrm{D}}^{T}\left( f_{\mathrm{d}} \right) +\mathbf{N}\in \mathbb{C} ^{MN\times K},
		\end{split}
	\end{equation}
	where
	\begin{align}
		\label{1010}
		\begin{split}
			\mathbf{Z}&=\left[ \mathbf{z}_0,\mathbf{z}_1,\cdots ,\mathbf{z}_{K-1} \right] \in \mathbb{C} ^{MN\times K}\\
			\mathbf{N}&=\left[ \mathbf{n}_0,\mathbf{n}_1,\cdots ,\mathbf{n}_{K-1} \right] \in \mathbb{C} ^{MN\times K}\\
			\boldsymbol{\omega }_{\mathrm{D}}^{}\left( f_{\mathrm{d}} \right) &=\left[ \mathrm{\omega}_{0}^{}\left( f_{\mathrm{d}} \right) ,\mathrm{\omega}_{2}^{}\left( f_{\mathrm{d}} \right) ,\cdots ,\mathrm{\omega}_{K-1}^{}\left( f_{\mathrm{d}} \right) \right] \in \mathbb{C} ^K.
		\end{split}
	\end{align}
	
	Let $\mathcal{H} _1$ hypothesis and $\mathcal{H} _0$ hypothesis represent the presence and absence of a poin-like target in the test data, respectively. Therefore, the detection problem interested can be modeled as a binary hypothesis, given by
	\begin{equation}
		\label{eq2}
		\begin{cases}
			\mathcal{H} _1:\begin{cases}
				\mathbf{Z}=\xi \mathbf{a}_{\mathrm{TR}}^{}\left( r,\theta \right) \boldsymbol{\omega }_{\mathrm{D}}^{T}\left( f_{\mathrm{d}} \right) +\mathbf{N}\\
				\mathbf{Z}_l=\mathbf{N}_l,l=1,2,\cdots ,L\\
			\end{cases}\\
			\mathcal{H} _0:\begin{cases}
				\mathbf{Z}=\mathbf{N}\\
				\mathbf{Z}_l=\mathbf{N}_l,l=1,2,\cdots ,L\\
			\end{cases}\\
		\end{cases}
	\end{equation}
	where	$\mathbf{Z}_l=\mathbf{N}_l\in \mathbb{C} ^{MN\times K},l=1,2,\cdots ,L$ is the homogenous training data obtained from adjacent cells with $L$ being number of neighbouring bins. Each column of $\mathbf{N}$ and $\mathbf{N}_l,l=1,2,\cdots ,L$ is an independent and identically distributed (IID) zero-mean complex-valued Gaussian random vector with an unknown covariance matrix $\mathbf{R}$.
	
	Compared with the published works \cite{GUI2020102861,huang2021adaptive}, this paper will study the detection problem \eqref{eq2} with the aid of training data term $\mathbf{Z}_l,l=1,2,\cdots ,L$. Moreover, different from \cite{Zeng2022GLRTbasedDF}, this study designs additional adaptive detectors based on the TGLRT and Rao criteria. Furthermore, we will correct statistical analysis errors in \cite{Zeng2022GLRTbasedDF} for OGLRT detector. Finally, it is worth noting that the detection problem \eqref{eq2} aligns with a specific instantiation of the detection model outlined in \cite{kelly1989adaptive,Liu2014AdaptiveHE}. However, what sets this study apart is its distinctive application of the model to the FDA-MIMO radar system. Moreover, beyond a mere analysis of the statistical properties of the proposed detectors, this paper provides closed-form expressions for the PFA and PD.

	\section{Detectors design}
	\label{sec3}
	In this section, detectors will be designed based on OGLRT, TGLRT and Rao criteria, to determine the presence or absence of the moving target from received data. 
	
	Define
	\begin{equation}
		\label{eq5}
		\mathbf{Y}=\left[ \mathbf{Z}_1,\mathbf{Z}_2,\cdots ,\mathbf{Z}_L \right] \in \mathbb{C} ^{MN\times KL},
	\end{equation}
	and $\mathbf{S}=\mathbf{YY}^{\dagger}$ is the sample covariance matrix (SCM). Notably, to ensure $\mathbf{S}$ is invertible, we need $LK>MN$.
	
	The joint probability density function (PDF) of $\mathbf{Z}$ and $\mathbf{Y}$ under $\mathcal{H}_1$ can be expressed as
	\begin{equation}
		\label{eq3}
		\begin{split}&
			f\left( \mathbf{Z},\mathbf{Y}\left| \xi \right. ,\mathbf{R},\mathcal{H} _1 \right) =\frac{1}{\pi ^{MNK\left( L+1 \right)}\det ^{K\left( L+1 \right)}\left( \mathbf{R} \right)}
			\\&
			\qquad \qquad \qquad \qquad \,\,\times \exp \left\{ -\mathrm{Tr}\left[ \mathbf{R}^{-1}\mathcal{Z} _{1}^{}\mathcal{Z} _{1}^{\dagger} \right] -\mathrm{Tr}\left[ \mathbf{R}^{-1}\mathbf{S} \right] \right\}  
		\end{split}
	\end{equation}
	where
	\begin{equation}
		\label{eq4}
		\begin{split}&
			\mathcal{Z} _{1}=\mathbf{Z}-\xi \mathbf{a}_{\mathrm{TR}}\left( r,\theta \right) \boldsymbol{\omega }_{\mathrm{D}}^{T}\left( f_{\mathrm{d}} \right). 
		\end{split}
	\end{equation}
	Similarly, the joint PDF under $\mathcal{H}_0$ is
	\begin{equation}
		\begin{split}
			\label{eq6}&
			f\left( \mathbf{Z},\mathbf{Y}\left| \mathbf{R},\mathcal{H} _0 \right. \right) 
			=\frac{1}{\pi ^{MNK\left( L+1 \right)}\det ^{K\left( L+1 \right)}\left( \mathbf{R} \right)}
			\\&
			\qquad\qquad\qquad\,\,\,\,\,\,\,\, \times \exp \left\{ -\mathrm{Tr}\left[ \mathbf{R}^{-1}\mathbf{ZZ}_{}^{\dagger} \right] -\mathrm{Tr}\left[ \mathbf{R}^{-1}\mathbf{S} \right] \right\} .
		\end{split} 
	\end{equation}
	\subsection{OGLRT}
	As shown in Appendix A, the OGLRT detector can be given by
	\begin{equation}
		\label{eq200}
		\varLambda =\frac{\tilde{\mathbf{a}}^{\dagger}\left( \mathbf{I}+\tilde{\mathbf{Z}}\mathbf{P}_{\boldsymbol{\omega }_{\mathrm{D}}^{*}\left( f_{\mathrm{d}} \right)}^{\bot}\tilde{\mathbf{Z}}^{\dagger} \right) ^{-1}\tilde{\mathbf{a}}}{\tilde{\mathbf{a}}^{\dagger}\left( \mathbf{I}+\tilde{\mathbf{Z}}\tilde{\mathbf{Z}}^{\dagger} \right) ^{-1}\tilde{\mathbf{a}}}\underset{\mathcal{H} _0}{\overset{\mathcal{H} _1}{\gtrless}}\lambda.
	\end{equation}
	with
	\begin{align}
		\label{eq111}
		\tilde{\mathbf{Z}}=&\mathbf{S}^{-1/2}\mathbf{Z},\\
		\label{eq121}
		\tilde{\mathbf{a}}=&\mathbf{S}^{-1/2}\mathbf{a}_{\mathrm{TR}}\left( r,\theta \right) .
	\end{align}
	
	It is essential  to highlight that \eqref{eq200} is identical to the detector Eq.(24) in \cite{Zeng2022GLRTbasedDF}. However, the authors believe there are some errors in the statistical analysis conducted by \cite{Zeng2022GLRTbasedDF}. These concerns will be further verified by the theoretical derivation and numerical simulations in Section \ref{sec41} and Section \ref{sec5}. Additionally, regarding structure, OGLRT resembles Kelly's GLRT (KGLRT)\cite{kelly1986adaptive}, and the distinctions between them will also be detailed in Section \ref{sec41}.
	
	\subsection{TGLRT}
	Suppose the covariance matrix $\mathbf{R}$ is known in advance,	the decision statistic of TGLRT criterion \cite{robey1992cfar} can be written as
	\begin{equation}
		\label{eq21}
		\varLambda_{TGLRT} =\frac{\underset{\xi}{\max}f\left( \mathbf{Z},\mathbf{Y}\left| \xi \right. ,\mathbf{R},\mathcal{H} _1 \right)}{\max f\left( \mathbf{Z},\mathbf{Y}\left| \mathbf{R},\mathcal{H} _0 \right. \right)}\underset{\mathcal{H} _0}{\overset{\mathcal{H} _1}{\gtrless}}\lambda. 
	\end{equation}
	Inserting \eqref{eq3} and \eqref{eq6} into \eqref{eq21} results in
	\begin{equation}
		\label{eq22}
		\varLambda_{TGLRT} =\underset{\xi}{\max}e^{\left\{ \mathrm{Tr}\left[ \mathbf{R}^{-1}\mathbf{ZZ}_{}^{\dagger} \right] -\mathrm{Tr}\left[ \mathbf{R}^{-1}\mathcal{Z} _1\mathcal{Z} _{1}^{\dagger} \right] \right.}\}\underset{\mathcal{H} _0}{\overset{\mathcal{H} _1}{\gtrless}}\lambda.
	\end{equation}
	Then, define
	\begin{equation}
		\label{eq23}
		g\left( \xi \right) =\mathrm{Tr}\left[ \mathbf{R}^{-1}\mathcal{Z} _1\mathcal{Z} _{1}^{\dagger} \right] ,
	\end{equation}
	Derivative $g(\xi)$ with respect to (w.r.t.) $\xi$, yields  
	\begin{equation}
		\label{eq24}
		\frac{\partial g\left( \xi \right)}{\partial \xi}=\boldsymbol{\omega }_{\mathrm{D}}^{T}\left( f_{\mathrm{d}} \right) \left( \mathbf{Z}-\xi \mathbf{a}_{\mathrm{TR}}\left( r,\theta \right) \boldsymbol{\omega }_{\mathrm{D}}^{T}\left( f_{\mathrm{d}} \right) \right) ^{\dagger}\mathbf{R}^{-1}\mathbf{a}_{\mathrm{TR}}\left( r,\theta \right). 
	\end{equation}
	Equating  $\frac{\partial g\left( \xi \right)}{\partial \xi}$ to zero gives the maximum likelihood estimation (MLE) of $\xi$, as
	\begin{equation}
		\label{eq25}
		\hat{\xi}=\frac{\mathbf{a}_{\mathrm{TR}}^{\dagger}\left( r,\theta \right) \mathbf{R}^{-1}\mathbf{Z\omega }_{\mathrm{D}}^{*}\left( f_{\mathrm{d}} \right)}{\mathbf{a}_{\mathrm{TR}}^{\dagger}\left( r,\theta \right) \mathbf{R}^{-1}\mathbf{a}_{\mathrm{TR}}\left( r,\theta \right) \boldsymbol{\omega }_{\mathrm{D}}^{T}\left( f_{\mathrm{d}} \right) \boldsymbol{\omega }_{\mathrm{D}}^{*}\left( f_{\mathrm{d}} \right)}.
	\end{equation}
	Plugging \eqref{eq25} into \eqref{eq23}, leads to
	\begin{equation}
		\begin{split}
			\label{eq26}&
			g\left( \hat{\xi} \right) =\mathrm{Tr}\left[ \mathbf{Z}^{\dagger}\mathbf{R}^{-1}\mathbf{Z}-\hat{\xi}\boldsymbol{\omega }_{\mathrm{D}}^{T}\left( f_{\mathrm{d}} \right) \mathbf{Z}^{\dagger}\mathbf{R}^{-1}\mathbf{a}_{\mathrm{TR}}\left( r,\theta \right) \right]. 
		\end{split}
	\end{equation}
	Substituting \eqref{eq26} into \eqref{eq22} and taking the logarithm of the result yield the expression for the TGLRT detector for known $\mathbf{R}$, given by
	\begin{equation}
		\begin{split}
			\label{eq27}&
			\varLambda _{TGLRT}=\frac{\mathbf{a}_{\mathrm{TR}}^{\dagger}\left( r,\theta \right) \mathbf{R}^{-1}\mathbf{Z}\boldsymbol{\omega }_{\mathrm{D}}^{*}\left( f_{\mathrm{d}} \right) \boldsymbol{\omega }_{\mathrm{D}}^{T}\left( f_{\mathrm{d}} \right) \mathbf{Z}^{\dagger}\mathbf{R}^{-1}\mathbf{a}_{\mathrm{TR}}\left( r,\theta \right)}{\mathbf{a}_{\mathrm{TR}}^{\dagger}\left( r,\theta \right) \mathbf{R}^{-1}\mathbf{a}_{\mathrm{TR}}\left( r,\theta \right) \boldsymbol{\omega }_{\mathrm{D}}^{T}\left( f_{\mathrm{d}} \right) \boldsymbol{\omega }_{\mathrm{D}}^{*}\left( f_{\mathrm{d}} \right)}
			\\&
			\qquad\quad\,\,\,\,       \underset{\mathcal{H} _0}{\overset{\mathcal{H} _1}{\gtrless}}\lambda . 
		\end{split}
	\end{equation}
	
	However, in practice, $\mathbf{R}$ is so unknown that \eqref{eq27} cannot be directly used. As a consequence, replacing $\mathbf{R}$ in \eqref{eq27} with SCM $\mathbf{S}$, the final TGLRT detector for the problem interested can be given by
	\begin{equation}
		\label{eq28}
		\varLambda_{TGLRT} =\frac{\left| \tilde{\mathbf{a}}_{}^{\dagger}\tilde{\mathbf{Z}}_{\boldsymbol{\omega }} \right|^2}{\tilde{\mathbf{a}}_{}^{\dagger}\tilde{\mathbf{a}}}\underset{\mathcal{H} _0}{\overset{\mathcal{H} _1}{\gtrless}}\lambda ,
	\end{equation}
	where 
	\begin{equation}
		\label{eq29}
		\tilde{\mathbf{Z}}_{\boldsymbol{\omega }}=\tilde{\mathbf{Z}}\frac{\boldsymbol{\omega }_{\mathrm{D}}^{*}\left( f_{\mathrm{d}} \right)}{\sqrt{\boldsymbol{\omega }_{\mathrm{D}}^{T}\left( f_{\mathrm{d}} \right) \boldsymbol{\omega }_{\mathrm{D}}^{*}\left( f_{\mathrm{d}} \right)}}.
	\end{equation}
	If we consider $\tilde{\mathbf{Z}}_{\boldsymbol{\omega }}$ as the received data, \eqref{eq28} shares the same form as the detector proposed by Robey \textit{et al}. \cite{robey1992cfar}. However, the critical distinction is that the received data in \eqref{eq29} also includes the Doppler information of the moving target. Given the established equivalence between TGLRT and the Wald test \cite{de2004new}, we avoid designing detectors based on the Wald criterion in this paper.
	
	\subsection{Rao}
	Assuming that $\mathbf{\Theta}$ is a parameter vector,  defined as
	\begin{equation}
		\label{eq141}
		\mathbf{\Theta }=\left[ \mathbf{\Theta }_{\mathrm{r}}^{T},\mathbf{\Theta }_{\mathrm{s}}^{T} \right] ^T,
	\end{equation}
	where,  $\mathbf{\Theta }_{\mathrm{r}}^{}=\left[ \xi _{\mathrm{R}},\xi _{\mathrm{I}} \right] $ is useful parameter vector, with $\xi _{\mathrm{R}}$ and $\xi _{\mathrm{I}}$ represent real and imaginary parts of $ \xi $, respectively. 
	$\mathbf{\Theta }_{\mathrm{s}}^{}=vec\left( \mathbf{R} \right) $ is an extra parameter vector.
	Furthermore, the Fisher information matrix (FIM) can be presented as
	\begin{equation}
		\label{eq143}
		\mathbf{F}\left( \mathbf{\Theta } \right) =E\left[ \left( \frac{\partial \ln f\left( \mathbf{Z}\left| \xi \right. ,\mathcal{H} _1 \right)}{\partial \mathbf{\Theta }^*} \right) \left( \frac{\partial \ln f\left( \mathbf{Z}\left| \xi \right. ,\mathcal{H} _1 \right)}{\partial \mathbf{\Theta }^T} \right) \right] ,
	\end{equation}
	or partitioned as
	\begin{equation}
		\label{eq144}
		\mathbf{F}\left( \mathbf{\Theta } \right) =\left[ \begin{array}{c}
			\mathbf{F}_{\mathbf{\Theta }_{\mathrm{r}},\mathbf{\Theta }_{\mathrm{r}}}\left( \mathbf{\Theta } \right) ,\mathbf{F}_{\mathbf{\Theta }_{\mathrm{r}},\mathbf{\Theta }_{\mathrm{s}}}\left( \mathbf{\Theta } \right)\\
			\mathbf{F}_{\mathbf{\Theta }_{\mathrm{s}},\mathbf{\Theta }_{\mathrm{r}}}\left( \mathbf{\Theta } \right) ,\mathbf{F}_{\mathbf{\Theta }_{\mathrm{s}},\mathbf{\Theta }_{\mathrm{s}}}\left( \mathbf{\Theta } \right)\\
		\end{array} \right] .
	\end{equation}
	Next, the statistic of Rao test \cite{kay1993fundamentals} can be expressed as
	\begin{equation}
		\begin{split}&
			\label{eq145}
			\varLambda_{Rao} =\left. \frac{\partial \ln f\left( \mathbf{Z}\left| \xi \right. ,\mathcal{H} _1 \right)}{\partial \mathbf{\Theta }_{\mathrm{r}}} \right|_{\mathbf{\Theta }=\hat{\mathbf{\Theta}}_0}^{T}\left[ \mathbf{F}^{-1}\left( \hat{\mathbf{\Theta}}_0 \right) \right] _{\mathbf{\Theta }_{\mathrm{r}},\mathbf{\Theta }_{\mathrm{r}}}
			\\&
			\qquad\qquad\qquad\qquad    \left. \,\,\times \frac{\partial \ln f\left( \mathbf{Z}\left| \xi \right. ,\mathcal{H} _1 \right)}{\partial \mathbf{\Theta }_{\mathrm{r}}^{*}} \right|_{\mathbf{\Theta }=\hat{\mathbf{\Theta}}_0}^{}\underset{\mathcal{H} _0}{\overset{\mathcal{H} _1}{\gtrless}}\lambda,
		\end{split}
	\end{equation}
	where $\hat{\mathbf{\Theta}}_0=\left[ \hat{\mathbf{\Theta}}_{\mathrm{r}_0}^{T},\hat{\mathbf{\Theta}}_{\mathrm{s}_0}^{T} \right] ^T$ is the MLE of $\mathbf{\Theta }$ under $\mathcal{H} _0$, and 
	\begin{align} 
		\label{eq30}
		\hat{\mathbf{\Theta}}_{r_0}^{}&=\left[ 0, 0 \right] \\
		\hat{\mathbf{\Theta}}_{\mathrm{s}_0}^{}&=E\left[ vec\left( \mathbf{ZZ}^{\dagger}+\mathbf{S} \right) \right] 
	\end{align} 
	In addition, we have
	\begin{equation}
		\begin{split}
			\label{eq146}
			\frac{\partial \ln f\left( \mathbf{Z}\left| \xi \right. ,\mathcal{H} _1 \right)}{\partial \mathbf{\Theta }_{\mathrm{r}}}
			=\left[ \frac{\partial \ln f\left( \mathbf{Z}\left| \xi \right. ,\mathcal{H} _1 \right)}{\partial \xi _{\mathrm{R}}},\frac{\partial \ln f\left( \mathbf{Z}\left| \xi \right. ,\mathcal{H} _1 \right)}{\partial \xi _{\mathrm{I}}} \right] ^T,
		\end{split}
	\end{equation}
	with
	\begin{equation}
		\begin{split}&
			\frac{\partial \ln f\left( \mathbf{Z}\left| \xi \right. ,\mathcal{H} _1 \right)}{\partial \xi _{\mathrm{R}}}=2\mathrm{Re}\left[ \boldsymbol{\omega }_{\mathrm{D}}^{T}\left( f_{\mathrm{d}} \right) \mathcal{Z}_{1}^{\dagger}\mathbf{R}^{-1}\mathbf{a}_{\mathrm{TR}}\left( r,\theta \right) \right],  
			%
		\end{split}
	\end{equation}
	\begin{equation}
		\begin{split}&
			\frac{\partial \ln f\left( \mathbf{Z}\left| \xi \right. ,\mathcal{H} _1 \right)}{\partial \xi _{\mathrm{I}}}=2\mathrm{Im}\left[ \boldsymbol{\omega }_{\mathrm{D}}^{T}\left( f_{\mathrm{d}} \right) \mathcal{Z}_{1}^{\dagger}\mathbf{R}^{-1}\mathbf{a}_{\mathrm{TR}}\left( r,\theta \right) \right] .  
		\end{split}
	\end{equation}
	Moreover, by resorting to Schuler complementary decomposition theorem\cite{zhang2017matrix}, $\left[ \mathbf{F}^{-1}\left( \hat{\mathbf{\Theta}}_0 \right) \right] _{\mathbf{\Theta }_{\mathrm{r}},\mathbf{\Theta }_{\mathrm{r}}}$ can be recast as
	\begin{equation}\label{147}
		\begin{split}&
			\left[ \mathbf{F}^{-1}\left( \hat{\mathbf{\Theta}}_0 \right) \right] _{\mathbf{\Theta }_{\mathrm{r}},\mathbf{\Theta }_{\mathrm{r}}}=\left[ \mathbf{F}_{\mathbf{\Theta }_{\mathrm{r}},\mathbf{\Theta }_{\mathrm{r}}}\left( \mathbf{\Theta } \right) \right. 
			\\&
			\qquad\left.  -\mathbf{F}_{\mathbf{\Theta }_{\mathrm{r}},\mathbf{\Theta }_{\mathrm{s}}}\left( \mathbf{\Theta } \right) \mathbf{F}_{\mathbf{\Theta }_{\mathrm{s}},\mathbf{\Theta }_{\mathrm{s}}}^{-1}\left( \mathbf{\Theta } \right) \mathbf{F}_{\mathbf{\Theta }_{\mathrm{s}},\mathbf{\Theta }_{\mathrm{r}}}\left( \mathbf{\Theta } \right) \right] ^{-1}\left| _{\mathbf{\Theta }=\hat{\mathbf{\Theta}}_0} \right. .
		\end{split}
	\end{equation}
	In this respect,	$\mathbf{F}_{\mathbf{\Theta }_{\mathrm{r}},\mathbf{\Theta }_{\mathrm{r}}}\left( \mathbf{\Theta } \right) $ can be partitioned as
	\begin{equation}
		\begin{split}
			\label{148}&
			\mathbf{F}_{\mathbf{\Theta }_{\mathrm{r}},\mathbf{\Theta }_{\mathrm{r}}}\left( \mathbf{\Theta } \right) =-E\left[ \begin{array}{c}
				\frac{\partial ^2\ln f\left( \mathbf{Z}\left| \xi \right. ,\mathcal{H} _1 \right)}{\partial \xi _{R}^{2}},\frac{\partial ^2\ln f\left( \mathbf{Z}\left| \xi \right. ,\mathcal{H} _1 \right)}{\partial \xi _{\mathrm{R}}\partial \xi _{\mathrm{I}}}\\
				\frac{\partial ^2\ln f\left( \mathbf{Z}\left| \xi \right. ,\mathcal{H} _1 \right)}{\partial \xi _{\mathrm{I}}\partial \xi _{\mathrm{R}}},\frac{\partial ^2\ln f\left( \mathbf{Z}\left| \xi \right. ,\mathcal{H} _1 \right)}{\partial \xi _{I}^{2}}\\
			\end{array} \right] 
		\end{split}
	\end{equation}
	where
	\begin{align}
		\begin{split}
			\label{149}&
			\frac{\partial ^2\ln f\left( \mathbf{Z}\left| \xi \right. ,\mathcal{H} _1 \right)}{\partial \xi _{\mathrm{R}}^{2}}=-2\boldsymbol{\omega }_{\mathrm{D}}^{T}\left( f_{\mathrm{d}} \right) \boldsymbol{\omega }_{\mathrm{D}}^{*}\left( f_{\mathrm{d}} \right) 
			\\&
			\qquad\qquad\qquad\qquad \,\,\,\,       \times \mathbf{a}_{\mathrm{TR}}^{\dagger}\left( r,\theta \right) \mathbf{R}^{-1}\mathbf{a}_{\mathrm{TR}}\left( r,\theta \right) ,
			\\&
			\frac{\partial ^2\ln f\left( \mathbf{Z}\left| \xi \right. ,\mathcal{H} _1 \right)}{\partial \xi _{\mathrm{I}}^{2}}=-2\boldsymbol{\omega }_{\mathrm{D}}^{T}\left( f_{\mathrm{d}} \right) \boldsymbol{\omega }_{\mathrm{D}}^{*}\left( f_{\mathrm{d}} \right) 
			\\&
			\qquad\qquad\qquad\qquad \,\,\,\,     \times \mathbf{a}_{\mathrm{TR}}^{\dagger}\left( r,\theta \right) \mathbf{R}^{-1}\mathbf{a}_{\mathrm{TR}}\left( r,\theta \right) ,
		\end{split}
	\end{align}
	\begin{align}
		\begin{split}
			\frac{\partial ^2\ln f\left( \mathbf{Z}\left| \xi \right. ,\mathcal{H} _1 \right)}{\partial \xi _{\mathrm{I}}\partial \xi _{\mathrm{R}}}=\frac{\partial ^2\ln f\left( \mathbf{Z}\left| \xi \right.,\mathcal{H} _1 \right)}{\partial \xi _{\mathrm{R}}\partial \xi _{\mathrm{I}}}=0.
		\end{split}
	\end{align}
	One can prove that $\mathbf{F}_{\mathbf{\Theta }_{\mathrm{r}},\mathbf{\Theta }_{\mathrm{s}}}\left( \mathbf{\Theta } \right) =\mathbf{0}_{2,\mathrm{MN}}$. Hence,  \eqref{147} can be recast as
	\begin{equation}
		\begin{split}
			\label{151}&
			\left[ \mathbf{F}^{-1}\left( \hat{\mathbf{\Theta}}_0 \right) \right] _{\mathbf{\Theta }_{\mathrm{r}},\mathbf{\Theta }_{\mathrm{r}}}=\left[ \mathbf{F}_{\mathbf{\Theta }_{\mathrm{r}},\mathbf{\Theta }_{\mathrm{r}}}\left( \mathbf{\Theta } \right) \right] ^{-1}\left| _{\mathbf{\Theta }=\hat{\mathbf{\Theta}}_0} \right.. 
		\end{split}
	\end{equation}
	Next, taking above derived results into \eqref{eq145} gives the Rao rest, as
	\begin{equation}
		\label{eq31}
		\begin{split}&
			\varLambda_{Rao} =\frac{\left| \mathbf{a}_{\mathrm{TR}}^{\dagger}\left( r,\theta \right) \left( \mathbf{ZZ}^{\dagger}+\mathbf{S} \right) ^{-1}\mathbf{Z}\boldsymbol{\omega }_{\mathrm{D}}^{*}\left( f_{\mathrm{d}} \right) \right|^2}{\mathbf{a}_{\mathrm{TR}}^{\dagger}\left( r,\theta \right) \left( \mathbf{ZZ}^{\dagger}+\mathbf{S} \right) ^{-1}\mathbf{a}_{\mathrm{TR}}\left( r,\theta \right) \boldsymbol{\omega }_{\mathrm{D}}^{T}\left( f_{\mathrm{d}} \right) \boldsymbol{\omega }_{\mathrm{D}}^{*}\left( f_{\mathrm{d}} \right)}
			\\&
			\,\,\qquad \underset{\mathcal{H} _0}{\overset{\mathcal{H} _1}{\gtrless}}\lambda, 
		\end{split}
	\end{equation}
	or equivalently
	\begin{equation}
		\label{eq32}
		\varLambda_{Rao} =\frac{\left| \tilde{\mathbf{a}}_{}^{\dagger}\left( \mathbf{I}+\tilde{\mathbf{Z}}\tilde{\mathbf{Z}}^{\dagger} \right) ^{-1}\tilde{\mathbf{Z}}_{\boldsymbol{\omega }} \right|^2}{\tilde{\mathbf{a}}_{}^{\dagger}\left( \mathbf{I}+\tilde{\mathbf{Z}}\tilde{\mathbf{Z}}^{\dagger} \right) ^{-1}\tilde{\mathbf{a}}}\underset{\mathcal{H} _0}{\overset{\mathcal{H} _1}{\gtrless}}\lambda. 
	\end{equation}
	Although the expression for the Rao detector proposed in \cite{de2007rao} is similar to \eqref{eq32}, the key difference lies in the inverse matrix term $\left( \mathbf{I}+\tilde{\mathbf{Z}}\tilde{\mathbf{Z}}^{\dagger} \right) $ when $\tilde{\mathbf{Z}}_{\boldsymbol{\omega }}$ is considered as the received data.
	%
	%
	
	\section{Statistical Analysis}
	In this section, we will derive closed-form expressions for the PFA and PD of the proposed detectors to facilitate their performance evaluation. Furthermore, during the analysis of Rao detector, the LHAMF detector will be introduced.
	\label{sec4}
	\subsection{The PFA and PD of  OGLRT}
	\label{sec41}
	According to the matrix inversion lemma \cite{zhang2017matrix}, we have
	\begin{equation}
		\label{eq33}
		\begin{split}&
			\left( \mathbf{I}+\tilde{\mathbf{Z}}\tilde{\mathbf{Z}}^{\dagger} \right) ^{-1}
			\\&
			=\left( \mathbf{I}+\tilde{\mathbf{Z}}\mathbf{P}_{\boldsymbol{\omega }_{\mathrm{D}}^{*}\left( f_{\mathrm{d}} \right)}^{\bot}\tilde{\mathbf{Z}}^{\dagger}+\tilde{\mathbf{Z}}_{\boldsymbol{\omega }}\tilde{\mathbf{Z}}_{\boldsymbol{\omega }}^{\dagger} \right) ^{-1}
			\\&
			=\left( \mathbf{I}+\tilde{\mathbf{Z}}\mathbf{P}_{\boldsymbol{\omega }_{\mathrm{D}}^{*}\left( f_{\mathrm{d}} \right)}^{\bot}\tilde{\mathbf{Z}}^{\dagger} \right) ^{-1}
			\\&
			\,\,\,\,-\frac{\left( \mathbf{I}+\tilde{\mathbf{Z}}\mathbf{P}_{\boldsymbol{\omega }_{\mathrm{D}}^{*}\left( f_{\mathrm{d}} \right)}^{\bot}\tilde{\mathbf{Z}}^{\dagger} \right) ^{-1}\tilde{\mathbf{Z}}_{\boldsymbol{\omega }}\tilde{\mathbf{Z}}_{\boldsymbol{\omega }}^{\dagger}\left( \mathbf{I}+\tilde{\mathbf{Z}}\mathbf{P}_{\boldsymbol{\omega }_{\mathrm{D}}^{*}\left( f_{\mathrm{d}} \right)}^{\bot}\tilde{\mathbf{Z}}^{\dagger} \right) ^{-1}}{1+\tilde{\mathbf{Z}}_{\boldsymbol{\omega }}^{\dagger}\left( \mathbf{I}+\tilde{\mathbf{Z}}\mathbf{P}_{\boldsymbol{\omega }_{\mathrm{D}}^{*}\left( f_{\mathrm{d}} \right)}^{\bot}\tilde{\mathbf{Z}}^{\dagger} \right) ^{-1}\tilde{\mathbf{Z}}_{\boldsymbol{\omega}}}.
		\end{split}
	\end{equation}
	Taking \eqref{eq33} into \eqref{eq200}, the expression of OGLRT can be recast as
	\begin{equation}
		\label{eq35}
		\varLambda =\frac{1}{1-\varLambda ^\prime}\underset{\mathcal{H} _0}{\overset{\mathcal{H} _1}{\gtrless}}\lambda, 
	\end{equation}
	where 
	\begin{equation}
		\begin{split}
			\label{key}&
			\varLambda ^{\prime}=\frac{\left| \tilde{\mathbf{a}}^{\dagger}\left( \mathbf{I}+\tilde{\mathbf{Z}}\mathbf{P}_{\boldsymbol{\omega }_{\mathrm{D}}^{*}\left( f_{\mathrm{d}} \right)}^{\bot}\tilde{\mathbf{Z}}^{\dagger} \right) ^{-1}\tilde{\mathbf{Z}}_{\boldsymbol{\omega }}^{} \right|^2}{\tilde{\mathbf{a}}^{\dagger}\left( \mathbf{I}+\tilde{\mathbf{Z}}\mathbf{P}_{\boldsymbol{\omega }_{\mathrm{D}}^{*}\left( f_{\mathrm{d}} \right)}^{\bot}\tilde{\mathbf{Z}}^{\dagger} \right) ^{-1}\tilde{\mathbf{a}}}
			\\&
			\qquad
			\times \frac{1}{1+\tilde{\mathbf{Z}}_{\boldsymbol{\omega }}^{\dagger}\left( \mathbf{I}+\tilde{\mathbf{Z}}\mathbf{P}_{\boldsymbol{\omega }_{\mathrm{D}}^{*}\left( f_{\mathrm{d}} \right)}^{\bot}\tilde{\mathbf{Z}}^{\dagger} \right) ^{-1}\tilde{\mathbf{Z}}_{\boldsymbol{\omega }}^{}}.
		\end{split}
	\end{equation}
	Furthermore, we define \eqref{eq36} as it appears on the first line of the next page. 
	\begin{figure*}[htp]
		\begin{equation}
			\label{eq36}
			\underset{\text{\underline{\hspace{17cm}}}}{
				\begin{split}&
					\varLambda ^{''}=\frac{\varLambda ^\prime}{1-\varLambda ^\prime}
					\\&
					\,\,\,\,\,\,\,=\frac{\left| \tilde{\mathbf{a}}^{\dagger}\left( \mathbf{I}+\tilde{\mathbf{Z}}\mathbf{P}_{\boldsymbol{\omega }_{\mathrm{D}}^{*}\left( f_{\mathrm{d}} \right)}^{\bot}\tilde{\mathbf{Z}}^{\dagger} \right) ^{-1}\tilde{\mathbf{Z}}_{\boldsymbol{\omega }}^{} \right|^2/\tilde{\mathbf{a}}^{\dagger}\left( \mathbf{I}+\tilde{\mathbf{Z}}\mathbf{P}_{\boldsymbol{\omega }_{\mathrm{D}}^{*}\left( f_{\mathrm{d}} \right)}^{\bot}\tilde{\mathbf{Z}}^{\dagger} \right) ^{-1}\tilde{\mathbf{a}}}{1+\tilde{\mathbf{Z}}_{\boldsymbol{\omega }}^{\dagger}\left( \mathbf{I}+\tilde{\mathbf{Z}}\mathbf{P}_{\boldsymbol{\omega }_{\mathrm{D}}^{*}\left( f_{\mathrm{d}} \right)}^{\bot}\tilde{\mathbf{Z}}^{\dagger} \right) ^{-1}\tilde{\mathbf{Z}}_{\boldsymbol{\omega }}^{}-\frac{\left| \tilde{\mathbf{a}}^{\dagger}\left( \mathbf{I}+\tilde{\mathbf{Z}}\mathbf{P}_{\boldsymbol{\omega }_{\mathrm{D}}^{*}\left( f_{\mathrm{d}} \right)}^{\bot}\tilde{\mathbf{Z}}^{\dagger} \right) ^{-1}\tilde{\mathbf{Z}}_{\boldsymbol{\omega }}^{} \right|^2}{\tilde{\mathbf{a}}^{\dagger}\left( \mathbf{I}+\tilde{\mathbf{Z}}\mathbf{P}_{\boldsymbol{\omega }_{\mathrm{D}}^{*}\left( f_{\mathrm{d}} \right)}^{\bot}\tilde{\mathbf{Z}}^{\dagger} \right) ^{-1}\tilde{\mathbf{a}}}},
			\end{split}}
		\end{equation}	     
	\end{figure*}  

	Since three expressions $\varLambda$, $\varLambda ^{'}$ and $\varLambda ^{''}$ are equivalent \cite{DeMaioGreco2015ModernRadar}, the subsequent analysis in this paper primarily focuses on \eqref{eq36}. If we let  $\tilde{\mathbf{a}}$, $\tilde{\mathbf{Z}}_{\boldsymbol{\omega }}$, and $
	\mathbf{I}+\tilde{\mathbf{Z}}\mathbf{P}_{\boldsymbol{\omega }_{\mathrm{D}}^{*}\left( f_{\mathrm{d}} \right)}^{\bot}\tilde{\mathbf{Z}}^{\dagger}$ denote the steering vector, received data, and SCM, respectively, the detector represented by \eqref{eq36} is formally consistent with the KGLRT. In addition, the term $	\tilde{\mathbf{a}}^{\dagger}\left( \mathbf{I}+\tilde{\mathbf{Z}}\mathbf{P}_{\boldsymbol{\omega }_{\mathrm{D}}^{*}\left( f_{\mathrm{d}} \right)}^{\bot}\tilde{\mathbf{Z}}^{\dagger} \right) ^{-1}\tilde{\mathbf{Z}}_{\boldsymbol{\omega }}^{}$ can be further rewritten as 
	\begin{equation}
		\label{eq377}
		\tilde{\mathbf{a}}^{\dagger}\left( \mathbf{I}+\tilde{\mathbf{Z}}\mathbf{P}_{\boldsymbol{\omega }_{\mathrm{D}}^{*}\left( f_{\mathrm{d}} \right)}^{\bot}\tilde{\mathbf{Z}}^{\dagger} \right) ^{-1}\tilde{\mathbf{Z}}_{\boldsymbol{\omega }}^{}=\frac{\mathbf{a}_{\mathrm{TR}}^{\dagger}\left( r,\theta \right) \mathbf{S}_{+}^{-1}\mathbf{Z}\boldsymbol{\omega }_{\mathrm{D}}^{*}\left( f_{\mathrm{d}} \right)}{\sqrt{\boldsymbol{\omega }_{\mathrm{D}}^{T}\left( f_{\mathrm{d}} \right) \boldsymbol{\omega }_{\mathrm{D}}^{*}\left( f_{\mathrm{d}} \right)}},
	\end{equation}
	with
	\begin{equation}
		\label{eq3800}
		\mathbf{S}_{+}^{}=\mathbf{S}+\mathbf{Z}\mathbf{P}_{\boldsymbol{\omega }_{\mathrm{D}}^{*}\left( f_{\mathrm{d}} \right)}^{\bot}\mathbf{Z}^{\dagger}.
	\end{equation}
	
	Next, our focus can turn on the term $\mathbf{S}_{+}$. 
	\begin{theorem}
		Under $\mathcal{H} _0$ hypothesis, $\mathbf{S}_{+}$ obeys a complex Wishart distribution with degrees of freedom (DoFs) being $\left( L+1 \right) K$, and $MN$ and covariance matrix being $\mathbf{R}$.
	\end{theorem}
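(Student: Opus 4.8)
The plan is to decompose $\mathbf{S}_{+}$ into a training part and a primary part, to recognize each as an independent complex Wishart matrix with the common dimension $MN$ and common scale $\mathbf{R}$, and then to add their degrees of freedom. From \eqref{eq3800} we have $\mathbf{S}_{+}=\mathbf{S}+\mathbf{Z}\mathbf{P}_{\boldsymbol{\omega }_{\mathrm{D}}^{*}\left( f_{\mathrm{d}} \right)}^{\bot}\mathbf{Z}^{\dagger}$, where $\mathbf{S}=\mathbf{Y}\mathbf{Y}^{\dagger}$ and $\mathbf{Y}=\left[ \mathbf{Z}_1,\cdots ,\mathbf{Z}_L \right]$ as in \eqref{eq5}. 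Under $\mathcal{H}_0$ the model \eqref{eq2} forces every column of $\mathbf{Z}=\mathbf{N}$ and of $\mathbf{Y}$ to be an IID $\mathcal{CN}\left( \mathbf{0},\mathbf{R} \right)$ vector, with the training block $\mathbf{Y}$ independent of the primary block $\mathbf{Z}$.

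First I would dispatch the training term. Because $\mathbf{Y}$ stacks $KL$ IID $\mathcal{CN}\left( \mathbf{0},\mathbf{R} \right)$ columns, the definition of the complex Wishart law gives at once $\mathbf{S}=\mathbf{Y}\mathbf{Y}^{\dagger}\sim CW\left( KL,MN,\mathbf{R} \right)$, i.e.\ $KL$ degrees of freedom, dimension $MN$, and covariance $\mathbf{R}$.

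Next I would handle the primary term $\mathbf{Z}\mathbf{P}_{\boldsymbol{\omega }_{\mathrm{D}}^{*}\left( f_{\mathrm{d}} \right)}^{\bot}\mathbf{Z}^{\dagger}$. Under $\mathcal{H}_0$ the $K$ primary snapshots collected in $\mathbf{Z}=\mathbf{N}$ are IID $\mathcal{CN}\left( \mathbf{0},\mathbf{R} \right)$, and the Hermitian idempotent Doppler projector acts only along the $K$-dimensional snapshot axis while leaving the spatial covariance $\mathbf{R}$ untouched. I would argue that this quadratic form is again complex Wishart with dimension $MN$ and scale $\mathbf{R}$, its degrees of freedom being the $K$ primary snapshots that enter $\mathbf{S}_{+}$. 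Pinning down this count is the delicate step, since one must follow exactly how the projector channels the contribution of the $K$ columns of $\mathbf{Z}$ into $\mathbf{S}_{+}$.

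Finally, I would glue the pieces together. As $\mathbf{S}$ depends on $\mathbf{Y}$ alone and $\mathbf{Z}\mathbf{P}_{\boldsymbol{\omega }_{\mathrm{D}}^{*}\left( f_{\mathrm{d}} \right)}^{\bot}\mathbf{Z}^{\dagger}$ on $\mathbf{Z}$ alone, the two are independent, so the additivity of degrees of freedom for independent complex Wishart matrices sharing the dimension $MN$ and the scale $\mathbf{R}$ yields $\mathbf{S}_{+}\sim CW\left( KL+K,MN,\mathbf{R} \right)=CW\left( \left( L+1 \right)K,MN,\mathbf{R} \right)$, as claimed. I expect the main obstacle to be the primary-term degrees-of-freedom count highlighted above; once that is secured, the result follows purely from the definition and the additivity property of the complex Wishart distribution.
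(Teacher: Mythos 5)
Your skeleton coincides with the paper's proof: the paper likewise splits $\mathbf{S}_{+}=\mathbf{S}+\mathbf{Z}\mathbf{P}_{\boldsymbol{\omega }_{\mathrm{D}}^{*}\left( f_{\mathrm{d}} \right)}^{\bot}\mathbf{Z}^{\dagger}$, notes $\mathbf{S}\sim CW\left( LK,MN;\mathbf{R} \right)$, observes that the two summands depend on disjoint data blocks, and concludes by the additivity of independent complex Wishart matrices (citing Theorem 7.3.2 of Anderson). So the training term and the final gluing step in your proposal are exactly the paper's.

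The genuine gap is the one you flagged yourself and then did not close: the degrees of freedom of the projected quadratic form $\mathbf{Z}\mathbf{P}_{\boldsymbol{\omega }_{\mathrm{D}}^{*}\left( f_{\mathrm{d}} \right)}^{\bot}\mathbf{Z}^{\dagger}$. Asserting that they equal ``the $K$ primary snapshots that enter $\mathbf{S}_{+}$'' is not an argument, and this count is the entire nontrivial content of the theorem beyond bookkeeping. Worse, the natural way to execute your plan undermines the asserted count: since $\mathbf{P}_{\boldsymbol{\omega }_{\mathrm{D}}^{*}\left( f_{\mathrm{d}} \right)}^{\bot}$ is a Hermitian idempotent of rank $K-1$ (it annihilates the one-dimensional span of $\boldsymbol{\omega }_{\mathrm{D}}^{*}\left( f_{\mathrm{d}} \right)$), one can factor it as $\mathbf{U}\mathbf{U}^{\dagger}$ with $\mathbf{U}\in \mathbb{C} ^{K\times \left( K-1 \right)}$ semi-unitary, so that $\mathbf{Z}\mathbf{P}_{\boldsymbol{\omega }_{\mathrm{D}}^{*}\left( f_{\mathrm{d}} \right)}^{\bot}\mathbf{Z}^{\dagger}=\left( \mathbf{ZU} \right) \left( \mathbf{ZU} \right) ^{\dagger}$ with the $K-1$ columns of $\mathbf{ZU}$ IID $\mathcal{C} \mathcal{N} \left( 0,\mathbf{R} \right)$ under $\mathcal{H} _0$ --- a Wishart with $K-1$, not $K$, degrees of freedom, which would put the total at $\left( L+1 \right) K-1$. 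The paper instead closes this step differently: invoking Appendix A of Kelly's 1989 report, it claims each column of $\mathbf{ZP}_{\boldsymbol{\omega }_{\mathrm{D}}^{*}\left( f_{\mathrm{d}} \right)}^{\bot}$ is $\mathcal{C} \mathcal{N} \left( 0,\mathbf{R} \right)$ and uses the idempotency identity $\mathbf{ZP}_{\boldsymbol{\omega }_{\mathrm{D}}^{*}\left( f_{\mathrm{d}} \right)}^{\bot}\left( \mathbf{P}_{\boldsymbol{\omega }_{\mathrm{D}}^{*}\left( f_{\mathrm{d}} \right)}^{\bot} \right) ^{\dagger}\mathbf{Z}^{\dagger}=\mathbf{ZP}_{\boldsymbol{\omega }_{\mathrm{D}}^{*}\left( f_{\mathrm{d}} \right)}^{\bot}\mathbf{Z}^{\dagger}$ to assign $K$ degrees of freedom (note, though, that the columns of $\mathbf{ZP}_{\boldsymbol{\omega }_{\mathrm{D}}^{*}\left( f_{\mathrm{d}} \right)}^{\bot}$ are neither independent nor individually of covariance $\mathbf{R}$, so this is exactly the point where any proof must do real work). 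In short: your decomposition and additivity are fine, but the proposal is incomplete precisely at the decisive step, and the rank-based completion you would most plausibly reach contradicts the stated count rather than delivering it.
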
	
	\begin{proof}
		One can easily prove that \cite{bandiera2022advanced} 
		\begin{equation}
			\label{eq399}
			\mathbf{S}\sim CW\left( LK,MN;\mathbf{R} \right). 
		\end{equation}
		
		Furthermore, based on the results from Appendix A in  \cite{kelly1989adaptive}, each column of $\mathbf{ZP}_{\boldsymbol{\omega }_{\mathrm{D}}^{*}\left( f_{\mathrm{d}} \right)}^{\bot}
		\in \mathbb{C} ^{MN\times K}
		$ is zero-mean Gaussian vector with covariance matrix $\mathbf{R}$, namely
		$
		\left[ \mathbf{ZP}_{\boldsymbol{\omega }_{\mathrm{D}}^{*}\left( f_{\mathrm{d}} \right)}^{\bot} \right] _k\sim \mathcal{C} \mathcal{N} \left( 0,\mathbf{R} \right) ,k=1,2,...,K
		$ under $\mathcal{H} _0$ hypothesis.
		
		 Due to $
		\mathbf{ZP}_{\boldsymbol{\omega }_{\mathrm{D}}^{*}\left( f_{\mathrm{d}} \right)}^{\bot}\left( \mathbf{P}_{\boldsymbol{\omega }_{\mathrm{D}}^{*}\left( f_{\mathrm{d}} \right)}^{\bot} \right) ^{\dagger}\mathbf{Z}^{\dagger}=\mathbf{ZP}_{\boldsymbol{\omega }_{\mathrm{D}}^{*}\left( f_{\mathrm{d}} \right)}^{\bot}\mathbf{Z}^{\dagger}
		$, we have
		\begin{equation}
			\label{eq400}
			\mathbf{Z}\mathbf{P}_{\boldsymbol{\omega }_{\mathrm{D}}^{*}\left( f_{\mathrm{d}} \right)}^{\bot}\mathbf{Z}^{\dagger}\sim CW\left( K,MN;\mathbf{R} \right). 
		\end{equation}
		By resorting to the result of [Theorem 7.3.2] \cite{anderson1984introduction}, one can  easily obtain that 
		\begin{equation}
			\label{eq401}
			\mathbf{S}_{+}^{}\sim CW\left( \left( L+1 \right) K,MN;\mathbf{R} \right). 
		\end{equation}
	\end{proof}	
	Despite \cite{Zeng2022GLRTbasedDF} stating that $\mathbf{S}_{+}^{}$ follows a complex Wishart distribution, the DoFs provided are incorrect.
	
	Based on the analysis above, we can further derive that
	\begin{equation}
		\label{eq37}
		\varLambda ^{''}\sim \begin{cases}
			C\mathcal{F} _{1,\mathrm{MM}-1}\\
			C\mathcal{F} _{1,\mathrm{MM}-1}\left( \alpha \mathscr{B} \right)\\
		\end{cases}\begin{array}{c}
			\mathrm{under}\,\,\mathcal{H} _0,\\
			\mathrm{under}\,\,\mathcal{H} _1,\\
		\end{array}
	\end{equation}
	where  $\mathrm{MM}=\left( L+1 \right) K-MN+1$ and $\alpha \mathscr{B} $ denote the DoFs and the non-central parameter, respectively, with
	\begin{equation}
		\label{388}
		\alpha =\left| \xi \right|^2\boldsymbol{\omega }_{\mathrm{D}}^{T}\left( f_{\mathrm{d}} \right) \boldsymbol{\omega }_{\mathrm{D}}^{*}\left( f_{\mathrm{d}} \right) \mathbf{a}_{\mathrm{TR}}^{\dagger}\left( r,\theta \right) \mathbf{R}^{-1}\mathbf{a}_{\mathrm{TR}}^{}\left( r,\theta \right) .
	\end{equation}
	Moreover, $\mathscr{B}$ follows the central complex Beta distribution, i.e.,
	\begin{equation}
		\label{39}
		\mathscr{B} \sim \begin{cases}
			C\beta _{\mathrm{MM}+1,MN-1}\\
			C\beta _{\mathrm{MM}+1,MN-1}\\
		\end{cases}\begin{array}{c}
			\mathrm{under}\,\,\mathcal{H} _0\\
			\mathrm{under}\,\,\mathcal{H} _1\\
		\end{array}.
	\end{equation}
	Further, the PDF of $\mathscr{B}$ is
	\begin{equation}
		\label{45}
		\begin{split}
			f_{\mathscr{B}}(\mathscr{B} )&=f_{\mathscr{B} \left| \mathscr{H} _0 \right.}(\mathscr{B} )=f_{\mathscr{B} \left| \mathscr{H} _1 \right.}(\mathscr{B} )\\
			\\&=\frac{\mathscr{B} ^{\mathrm{MM}}\left( 1-\mathscr{B} \right) ^{MN-2}}{\beta \left( \mathrm{MM}+1,MN-1 \right)}.\\
		\end{split}
	\end{equation}
	Therefore, \eqref{eq37} shows that OGLRT own the CFAR property against the covariance matrix under $\mathcal{H}_0$ hypothesis. 
	
	Next, by using the results \cite{Kelly1981FiniteSumEF,kelly1986adaptive}, the expression for 
	the PFA of OGLRT can be written as
	\begin{equation}
		\label{41}
		\mathrm{PFA}=\left( 1+\lambda \right) ^{-\mathrm{MM}}
	\end{equation}
	Finally, we have 
	\begin{equation}
		\label{477}
		P_{\mathrm{D}}^{}=\int_0^1{\left[ 1-\mathcal{P} _{\mathscr{H} _1,\mathscr{B}}\left( \lambda \right) \right]}f_{\mathscr{B}}(\mathscr{B} )d\mathscr{B}, 
	\end{equation}
	with
	\begin{equation}
		\label{48}
		\mathcal{P} _{\mathscr{H} _1,\mathscr{B}}\left( \gamma \right) =\sum_{i=0}^{\mathrm{MM}-1}{C_{\mathrm{MM}}^{1+i}}\frac{\gamma ^{1+i}}{(1+\gamma )^{\mathrm{MM}}}\mathrm{IG}_{i+1}\left( \frac{\alpha \mathscr{B}}{1+\gamma} \right). 
	\end{equation}
	The notation $\mathrm{IG}_{\iota +1}\left( \boldsymbol{\hbar } \right) $ stands for the incomplete gamma function, given by
	\begin{equation}
    \mathrm{IG}_{\iota +1}\left( \boldsymbol{\hbar } \right) =\exp \left( -\boldsymbol{\hbar } \right) \sum_{j=0}^{\iota}{\frac{1}{\varGamma \left( j+1 \right)}}\boldsymbol{\hbar }^j
	\end{equation}

	\subsection{The PFA and PD of TGLRT}   
	In this section, we will derive closed-form expressions for the PFA and PD of the TGLRT detector. For the TGLRT detector in \eqref{eq28}, its detection statistic expression is consistent with the detector in \cite{robey1992cfar}. 
	For the threshold of the TGLRT detector, it is only necessary to rewrite the $\lambda$ in  \eqref{41} as $\lambda\mathscr{B}$. In other words, according to the results in \cite{robey1992cfar}, the relationship between the detection threshold of TGLRT and PFA can be expressed as:
	\begin{equation}
		\label{49}
		\mathrm{PFA}=\int_0^1{\frac{1}{\left( 1+\lambda \mathscr{B} _1 \right) ^{\mathrm{MM}_1}}}f_{\mathscr{B} _1}(\mathscr{B} _1)d\mathscr{B} _1,
	\end{equation}
	where $ \mathrm{MM}_1=LK-MN+1 $ and $\mathscr{B} _1\sim C\beta _{\mathrm{MM}_1,MN-1} $. The PDF of $\mathscr{B}_1$, denoted as $f_{\mathscr{B} _1}(\mathscr{B} _1)$, can be obtained by replacing the degrees of freedom $\mathrm{MM}$ in \eqref{45} with $\mathrm{MM}_1$. Obviously, TGLRT possesses the CFAR property due to the reason that 	
	the PFA expression \eqref{49} is irrelevant with the covariance matrix. 
	
	Additionally, the PD can be expressed as
	\begin{equation}
		\label{500}
		P_{\mathrm{D}}^{}=\int_0^1{\left[ 1-\mathcal{P} _{\mathscr{H} _1,\mathscr{B}_1}\left( \lambda \mathscr{B}_1 \right) \right]}f_{\mathscr{B}_1}(\mathscr{B}_1 )d\mathscr{B}_1. 
	\end{equation}
	After substituting $\mathrm{MM}_1$ for $\mathrm{MM}$ in \eqref{48}, we can obtain the expression for $\mathcal{P} _{\mathscr{H} _1,\mathscr{B} _1}\left( \gamma \right) $.
	
	\subsection{The PFA and PD of Rao}   
	Using \eqref{eq33}, the denominator of \eqref{eq32} can be recast as
	\begin{equation}
		\begin{split}
			\label{46}&
			\tilde{\mathbf{a}}_{}^{\dagger}\left( \mathbf{I}+\tilde{\mathbf{Z}}\tilde{\mathbf{Z}}^{\dagger} \right) ^{-1}\mathbf{a}=\tilde{\mathbf{a}}_{}^{\dagger}\left( \mathbf{I}+\tilde{\mathbf{Z}}\mathbf{P}_{\boldsymbol{\omega }_{\mathrm{D}}^{*}\left( f_{\mathrm{d}} \right)}^{\bot}\tilde{\mathbf{Z}}^{\dagger} \right) ^{-1}\mathbf{a}
			\\&
			\qquad\qquad\qquad \qquad\,\,\,   -\frac{\left| \tilde{\mathbf{a}}_{}^{\dagger}\left( \mathbf{I}+\tilde{\mathbf{Z}}\mathbf{P}_{\boldsymbol{\omega }_{\mathrm{D}}^{*}\left( f_{\mathrm{d}} \right)}^{\bot}\tilde{\mathbf{Z}}^{\dagger} \right) ^{-1}\tilde{\mathbf{Z}}_{\boldsymbol{\omega }} \right|^2}{1+\tilde{\mathbf{Z}}_{\boldsymbol{\omega }}^{\dagger}\left( \mathbf{I}+\tilde{\mathbf{Z}}\mathbf{P}_{\boldsymbol{\omega }_{\mathrm{D}}^{*}\left( f_{\mathrm{d}} \right)}^{\bot}\tilde{\mathbf{Z}}^{\dagger} \right) ^{-1}\tilde{\mathbf{Z}}_{\boldsymbol{\omega }}}.
		\end{split}
	\end{equation}
	Besides, the numerator of \eqref{eq32} can be simplified as
	\begin{equation}
		\begin{split}
			\label{47}&
			\left| \tilde{\mathbf{a}}_{}^{\dagger}\left( \mathbf{I}+\tilde{\mathbf{Z}}\tilde{\mathbf{Z}}^{\dagger} \right) ^{-1}\tilde{\mathbf{Z}}_{\boldsymbol{\omega }} \right|^2
			\\&
			=\frac{\left| \tilde{\mathbf{a}}_{}^{\dagger}\left( \mathbf{I}+\tilde{\mathbf{Z}}\mathbf{P}_{\boldsymbol{\omega }_{\mathrm{D}}^{*}\left( f_{\mathrm{d}} \right)}^{\bot}\tilde{\mathbf{Z}}^{\dagger} \right) ^{-1}\tilde{\mathbf{Z}}_{\boldsymbol{\omega }} \right|^2}{\left| 1+\tilde{\mathbf{Z}}_{\boldsymbol{\omega }}^{\dagger}\left( \mathbf{I}+\tilde{\mathbf{Z}}\mathbf{P}_{\boldsymbol{\omega }_{\mathrm{D}}^{*}\left( f_{\mathrm{d}} \right)}^{\bot}\tilde{\mathbf{Z}}^{\dagger} \right) ^{-1}\tilde{\mathbf{Z}}_{\boldsymbol{\omega }} \right|^2}.
		\end{split}
	\end{equation}
	Applying \eqref{46} and \eqref{47}, after some algebra, Rao detector expression can be rewritten as \eqref{eq467} (which is shown on the first line of next page)
	\begin{figure*}[htp]
		\begin{equation}
			\label{eq467}
			\underset{\text{\underline{\hspace{17cm}}}}{
				\begin{split}&
					\varLambda_{Rao} 
					=\frac{\left| \tilde{\mathbf{a}}_{}^{\dagger}\left( \mathbf{I}+\tilde{\mathbf{Z}}\mathbf{P}_{\boldsymbol{\omega }_{\mathrm{D}}^{*}\left( f_{\mathrm{d}} \right)}^{\bot}\tilde{\mathbf{Z}}^{\dagger} \right) ^{-1}\tilde{\mathbf{Z}}_{\boldsymbol{\omega }} \right|^2}{\left( 1+\tilde{\mathbf{Z}}_{\boldsymbol{\omega }}^{\dagger}\left( \mathbf{I}+\tilde{\mathbf{Z}}\mathbf{P}_{\boldsymbol{\omega }_{\mathrm{D}}^{*}\left( f_{\mathrm{d}} \right)}^{\bot}\tilde{\mathbf{Z}}^{\dagger} \right) ^{-1}\tilde{\mathbf{Z}}_{\boldsymbol{\omega }} \right) \tilde{\mathbf{a}}_{}^{\dagger}\left( \mathbf{I}+\tilde{\mathbf{Z}}\mathbf{P}_{\boldsymbol{\omega }_{\mathrm{D}}^{*}\left( f_{\mathrm{d}} \right)}^{\bot}\tilde{\mathbf{Z}}^{\dagger} \right) ^{-1}\mathbf{a}}
					\\&\qquad
					\times \frac{1}{1+\tilde{\mathbf{Z}}_{\boldsymbol{\omega }}^{\dagger}\left( \mathbf{I}+\tilde{\mathbf{Z}}\mathbf{P}_{\boldsymbol{\omega }_{\mathrm{D}}^{*}\left( f_{\mathrm{d}} \right)}^{\bot}\tilde{\mathbf{Z}}^{\dagger} \right) ^{-1}\tilde{\mathbf{Z}}_{\boldsymbol{\omega }}-\frac{\left| \tilde{\mathbf{a}}_{}^{\dagger}\left( \mathbf{I}+\tilde{\mathbf{Z}}\mathbf{P}_{\boldsymbol{\omega }_{\mathrm{D}}^{*}\left( f_{\mathrm{d}} \right)}^{\bot}\tilde{\mathbf{Z}}^{\dagger} \right) ^{-1}\tilde{\mathbf{Z}}_{\boldsymbol{\omega }} \right|^2}{\tilde{\mathbf{a}}_{}^{\dagger}\left( \mathbf{I}+\tilde{\mathbf{Z}}\mathbf{P}_{\boldsymbol{\omega }_{\mathrm{D}}^{*}\left( f_{\mathrm{d}} \right)}^{\bot}\tilde{\mathbf{Z}}^{\dagger} \right) ^{-1}\mathbf{a}}}	
			\end{split}}
		\end{equation}
	\end{figure*}  
	or equivalently
	\begin{equation}
		\label{544}
		\varLambda_{Rao} =\frac{\varLambda ^{\prime}\varLambda ^{''}}{\varLambda _{LHAMF}}\underset{\mathcal{H} _0}{\overset{\mathcal{H} _1}{\gtrless}}\lambda, 
	\end{equation}
	where the definition of $\varLambda _{LHAMF}$ is 
	\begin{equation}
		\label{50}
		\varLambda _{LHAMF}=\frac{\left| \tilde{\mathbf{a}}^{\dagger}\left( \mathbf{I}+\tilde{\mathbf{Z}}\mathbf{P}_{\boldsymbol{\omega }_{\mathrm{D}}^{*}\left( f_{\mathrm{d}} \right)}^{\bot}\tilde{\mathbf{Z}}^{\dagger} \right) ^{-1}\tilde{\mathbf{Z}}_{\boldsymbol{\omega }}^{} \right|^2}{\tilde{\mathbf{a}}^{\dagger}\left( \mathbf{I}+\tilde{\mathbf{Z}}\mathbf{P}_{\boldsymbol{\omega }_{\mathrm{D}}^{*}\left( f_{\mathrm{d}} \right)}^{\bot}\tilde{\mathbf{Z}}^{\dagger} \right) ^{-1}\tilde{\mathbf{a}}}.
	\end{equation}

It is worth noting that if we define the steering vector, received data, and SCM as $\tilde{\mathbf{a}}$, $
\tilde{\mathbf{Z}}_{\boldsymbol{\omega }}$, $\left( \mathbf{I}+\tilde{\mathbf{Z}}\mathbf{P}_{\boldsymbol{\omega }_{\mathrm{D}}^{*}\left( f_{\mathrm{d}}\right)}^{\bot}\tilde{\mathbf{Z}}^{\dagger} \right) $ respectively, then \eqref{50} is the corresponding TGLRT detector, which is also identical in form to the AMF detector in reference \cite{robey1992cfar}. In order to distinguish the detector represented by \eqref{50} from the previous TGLRT detector in the text, this paper defines it as the LHAMF detector\footnote{ "LH" represents the first letters of the surnames of the first two authors.}.

In contrast to TGLRT, the LHAMF detector utilizes both training and test data for covariance matrix estimation, potentially leading to slightly better performance. Additionally, by changing the degrees of freedom in \eqref{49} and \eqref{500} from $\mathrm {MM}_1$ to $\mathrm {MM}$, the expressions for PFA and PD of the LHAMF detector can be derived, i.e.,
	\begin{equation}
		\label{499}
		\mathrm{PFA}=\int_0^1{\frac{1}{\left( 1+\lambda \mathscr{B}  \right) ^{\mathrm{MM}}}}f_{\mathscr{B} }(\mathscr{B})d\mathscr{B} 
	\end{equation}
	and
	\begin{equation}
		\label{5000}
		P_{\mathrm{D}}^{}=\int_0^1{\left[ 1-\mathcal{P} _{\mathscr{H} _1,\mathscr{B}}\left( \lambda \mathscr{B} \right) \right]}f_{\mathscr{B}}(\mathscr{B} )d\mathscr{B}.
	\end{equation}
	
	Define
	\begin{equation}
		\label{51}
		\mathscr{B} =\frac{\varLambda ^{''}}{\varLambda _{LHAMF}}.
	\end{equation}
	Note that referring to the result in \cite{de2007rao}, the distribution of random variables presented in \eqref{51} aligns with \eqref{39}. Hence, we adopt the same variable symbol. Subsequently, the expression for the Rao detector can be given by
	\begin{equation}
		\varLambda =\frac{\mathscr{B} \varLambda ^{''}}{1+\varLambda ^{''}}\underset{\mathcal{H} _0}{\overset{\mathcal{H} _1}{\gtrless}}\lambda.
	\end{equation}
Similar to \cite{de2007rao}, the PFA for Rao detector can be expressed as
	\begin{equation}
		\label{53}
		\mathrm{PFA}=\left( 1-\lambda \right) ^{\left( L+1 \right) K-1}.
	\end{equation}
	Eq.\eqref{53} implies that there is no covariance matrix term in the expression of PFA for Rao detector. Above result further indicates that Rao detector exhibit CFAR property against noise covariance matrix under $\mathcal{H}_0$ hypothesis.
	
	 Furthermore, the detection probability can be expressed as
	\begin{equation}
		\label{61}
		P_{\mathrm{D}}^{}=\int_{\lambda}^1{\left[ 1-\mathcal{P} _{\mathscr{H} _1,\mathscr{B}}\left( \frac{\lambda}{\mathscr{B} -\lambda} \right) \right] f_{\mathscr{B}}(\mathscr{B} )d\mathscr{B}}
	\end{equation}
	
	Finally, for the convenience of readers, Table I provides the PD and PFA for detectors above mentioned.  
	\begin{table}[htp]
		\label{table1}
		\caption{The statistical expressions, PD and PFA for Proposed Detectors.}
		\vspace{0pt}
		\centering
		\renewcommand{\arraystretch}{1.5} 
		\begin{tabularx}{214pt}{c|cccc}  \hline
			{  }&{OGLRT} &{TGLRT}&{LHAMF}&{Rao}\\	\hline 
			Statistical expression &\eqref{eq200}&\eqref{eq28} &\eqref{50}&\eqref{eq32}\\		
			$\mathrm{PFA}$ &\eqref{41} &\eqref{49} &\eqref{499}&\eqref{53}\\		
			$P_{\mathrm{D}}$&\eqref{477} &\eqref{500}&\eqref{5000}&\eqref{61}\\		\hline  
		\end{tabularx}
	\end{table}
	
	\section{Simulation Results}
	\label{sec5}
	This section will verify the derived theory's correctness through numerical simulation.
	The parameters are set as follows: $M=4$, $N=3$ are the number of transmit and receive elements, respectively. The carrier frequency is $f_0=2$ GHz, and the frequency offset between the elements equals the signal bandwidth, i.e., $B=\Delta f=1$MHz, and Doppler frequency shift is $f_{\mathrm{d}}=0.2$. $d_{\mathrm{T}}=d_{\mathrm{R}}=\frac{c}{2f_0}$ represent the distance between the transmit and receive elements. Further, the Signal-to-Jamming-Plus-Noise Ratio (SJNR) is defined as
	\begin{equation}
		\label{77}
		\mathbf{R}=\sigma _{\mathrm{c}}^{2}\left( \sum_{\mathrm{d}}{\sigma _{\mathrm{jam},\mathrm{d}}^{2}\mathbb{R} _{\mathrm{d}}}+\sum_{\mathrm{u}}{\sigma _{\mathrm{sup},\mathrm{u}}^{2}\mathbb{R} _{\mathrm{u}}}+\mathbf{I} \right), 
	\end{equation} 
	$\sigma _{\mathrm{jam},\mathrm{d}}^{2}$ and $\sigma _{\mathrm{sup},\mathrm{u}}^{2}$ represents the power of the $d $th deception jamming signal and the ${u} $th suppression jamming. $\mathbb{R} _{\mathrm{d}}$ and $ \mathbb{R} _{\mathrm{u}} $  denote the covariance matrix of deceptive jamming and suppressive jamming, respectively. They are defined as
	\begin{equation}
		\begin{split}
			\label{78}&
			\mathbb{R} _{\mathrm{d}}=\mathbf{a}_{\mathrm{TR}}\left( r_{\mathrm{d}},\theta _{\mathrm{d}} \right) \mathbf{a}_{\mathrm{TR}}^{\dagger}\left( r_{\mathrm{d}},\theta _{\mathrm{d}} \right) 
			\\&
			\mathbb{R} _{\mathrm{u}}=\mathbf{1}_{\mathrm{M}\times \mathrm{M}}\otimes \left[ \mathbf{a}_{\mathrm{R}}\left( \theta _{\mathrm{u}} \right) \mathbf{a}_{\mathrm{R}}^{\dagger}\left( \theta _{\mathrm{u}} \right) \right] 
		\end{split}
	\end{equation} 
	where $\mathbf{1}_{\mathrm{M}\times \mathrm{M}}$ is a matrix with all $1$ elements, $\sigma _{\mathrm{c}}^{2}$ and $\mathbf{I}$ represent the power and covariance matrix of noise, respectively. In addition, $\delta _{\mathrm{jam}}=10\log \frac{\sum_{\mathrm{d}}{\sigma _{\mathrm{jam},\mathrm{d}}^{2}}}{\sigma _{c}^{2}}$ and $\delta _{\mathrm{sup}}=10\log\frac{\sum_{\mathrm{u}}{\sigma _{\mathrm{sup},\mathrm{u}}^{2}}}{\sigma _{\mathrm{c}}^{2}}$ represent jamming to noise ratio (JNR) for deception jamming and suppression jamming. At the same time, signal to noise ratio (SNR) is defined as $\mathrm{SNR}=10\log \frac{\left| \xi \right|^2}{\sigma _{\mathrm{n}}^{2}}$. The target is located in $\left( r,\theta \right) =\left( 15.12 \mathrm{Km},30^{\circ} \right) $, two deceptive targets located $\left( r_1,\theta _1 \right) =\left( 15.165 \mathrm{Km},30^{\circ} \right)$, $\left( r_2,\theta _2 \right) =\left( 30.48 \mathrm{Km},28^{\circ} \right)$, and $\delta _\mathrm{jam,1}=\delta _\mathrm{jam,2}=20$ dB, respectively, and one suppressive jammer with JNR of  $\delta _\mathrm{sup}=30$ dB and the angle is $\theta _\mathrm{u}=-20^{\circ}$.
	The detection threshold and PD obtained by $\frac{100}{\mathrm{PFA}}$ and $10^{4}$ independent Monte Carlo (MC) trials, respectively, where $\mathrm{PFA}=10^{-3}$. 

Additionally, for cross-validation purposes, we compared the performance of the statistical expressions with the expressions for PFA and PD.
	It is worth noting that the experimental results obtained from MC and statistical equivalent expressions are marked as "MC" and  "TH", respectively.

	\begin{figure}[htp]
		\centering
		\subfigure[]{\includegraphics[width=0.35\textwidth]{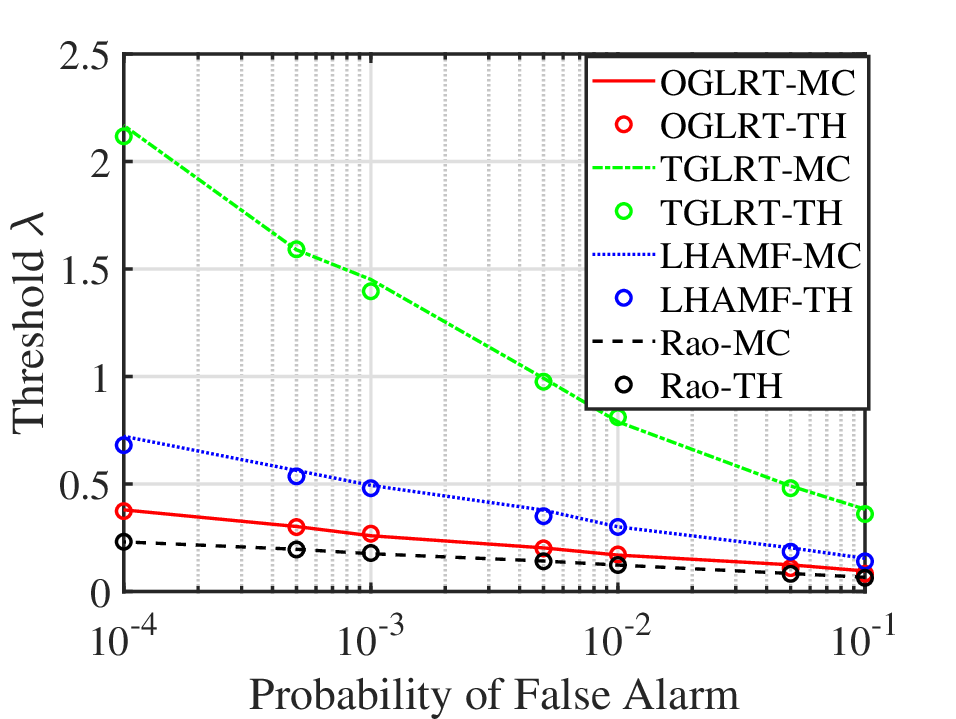}}
		\caption{The detection threshold of the proposed detectors varies with the false alarm probability.}
		\label{fig1}	
	\end{figure}    
Fig.\ref{fig1} shows the variation curves of the detection threshold of the proposed detectors with respect to the false alarm rate $
{\mathrm{PFA}}
$. The lines represent the experimental results obtained through MC simulations, while the symbols $\circ$ represent the theoretical values obtained using the expressions in the second row of Table I. It can be observed from the figure that the theoretical values closely match the results obtained from MC simulations, indicating the correctness of the theoretical derivations in this section.

%
	\begin{figure}[htp]
		\centering
		\subfigure[]{\includegraphics[width=0.35\textwidth]{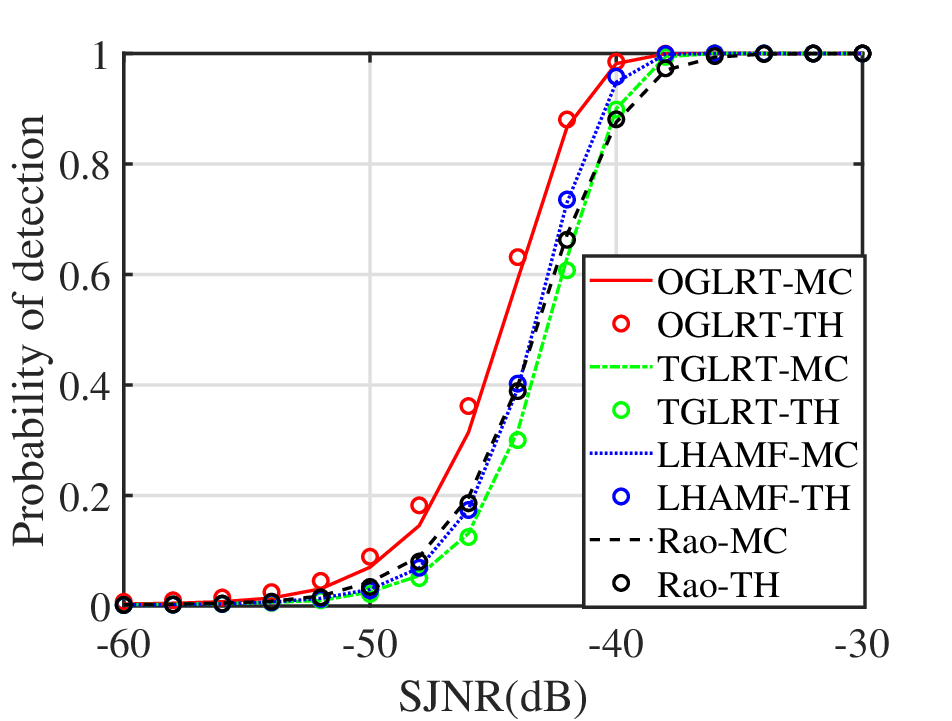}}
		\subfigure[]{\includegraphics[width=0.35\textwidth]{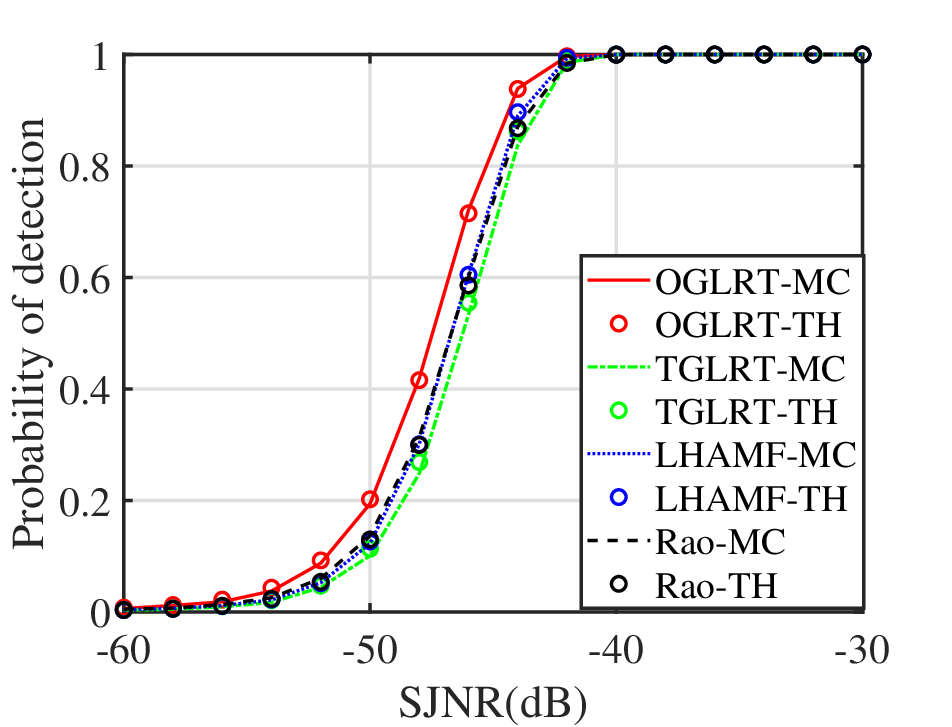}}	
		\subfigure[]{\includegraphics[width=0.35\textwidth]{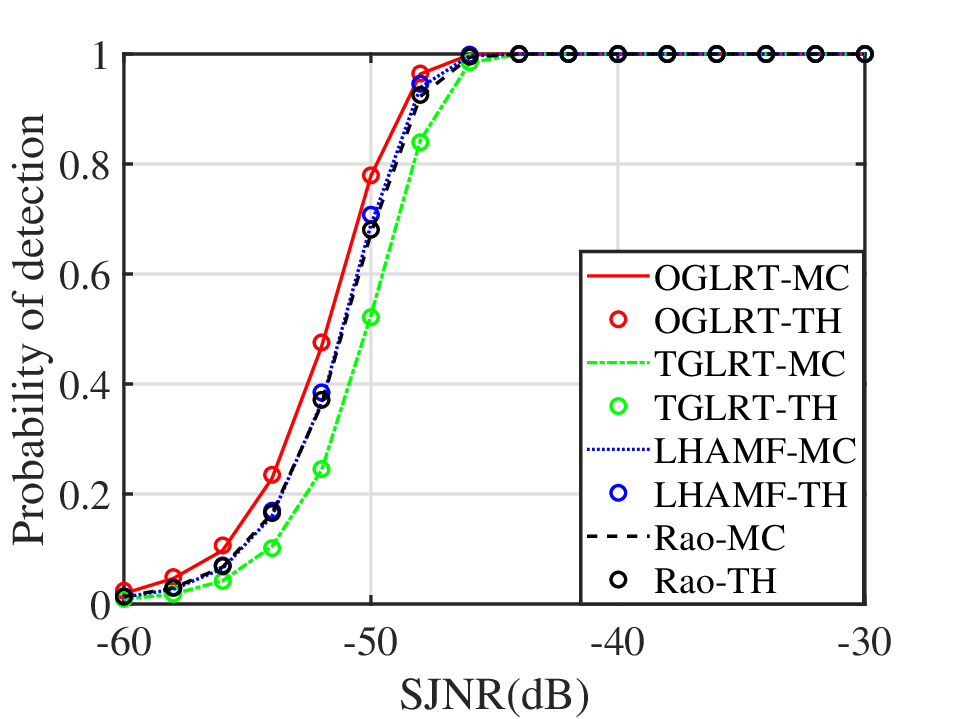}}
		\subfigure[]{\includegraphics[width=0.35\textwidth]{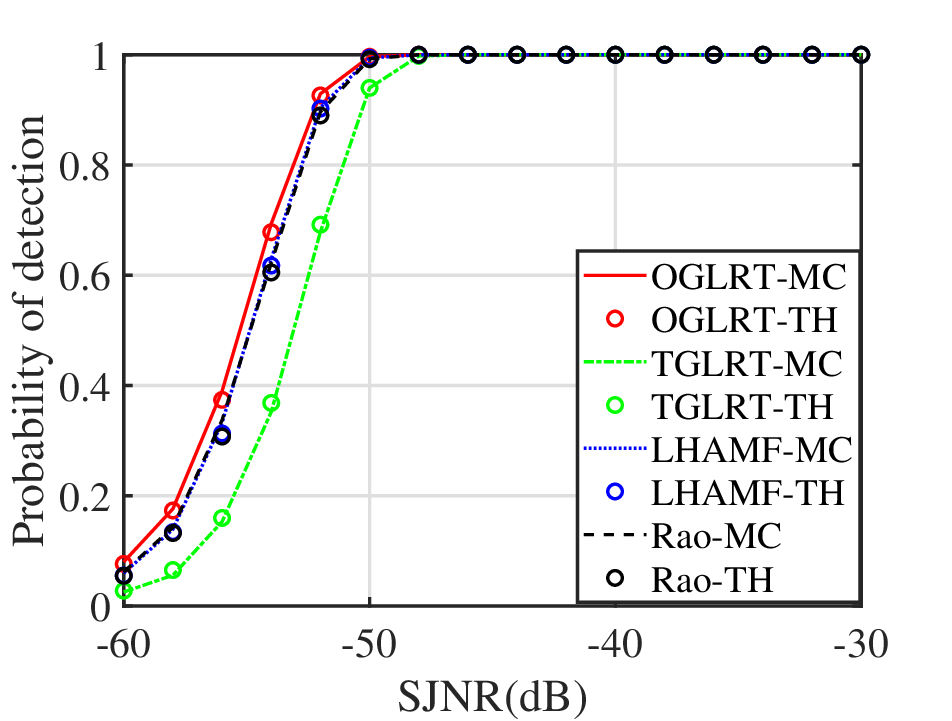}}
		\caption{The curves of PD versus SJNR under different number of training data and snapshots. (a) $L$=4, $K$=6, (b) $L=6$, $K=6$, (c) $L=2$, $K=16$, (d) $L=1$, $K=32$.}
		\label{fig2}
	\end{figure}
	
	Under different numbers of training data and snapshots,  Fig.\ref{fig2} presents the PDs versus SJNR for proposed detectors. The agreement between the MC simulations and theoretical predictions in Fig.\ref{fig2} validates the correctness of the prior theoretical analysis. 
	In Fig.\ref{fig2} (a), where training data is limited, the alignment between MC and theoretical results for OGLRT is weaker compared to Fig.\ref{fig2} (b) with more training data. 
	Additionally, Fig.\ref{fig2} (a) reveals OGLRT's superior performance under the given simulation conditions, while TGLRT performs the poorest, and LHAMF outperforms TGLRT. Comparing Fig.\ref{fig2} (a) and Fig.\ref{fig2} (b), it becomes evident that having more training data enhances detector performance under the same number of snapshots, allowing for more accurate covariance matrix estimation.
	Comparing Fig.\ref{fig2} (b) and Fig.\ref{fig2} (c) suggests that, with limited training data, increasing the number of snapshots also improves the detection performance. 
	In both cases, the data used for covariance matrix estimation already exceeds $2MN$. Thus, enhancing the accuracy of covariance matrix estimation through increased training data yields marginal improvements. However, in this scenario, increasing the number of snapshots significantly enhances detection performance by improving covariance estimation accuracy, as demonstrated in Fig.\ref{fig2} (d).
	\begin{figure}[htp]
		\centering
		\subfigure[]{\includegraphics[width=0.35\textwidth]{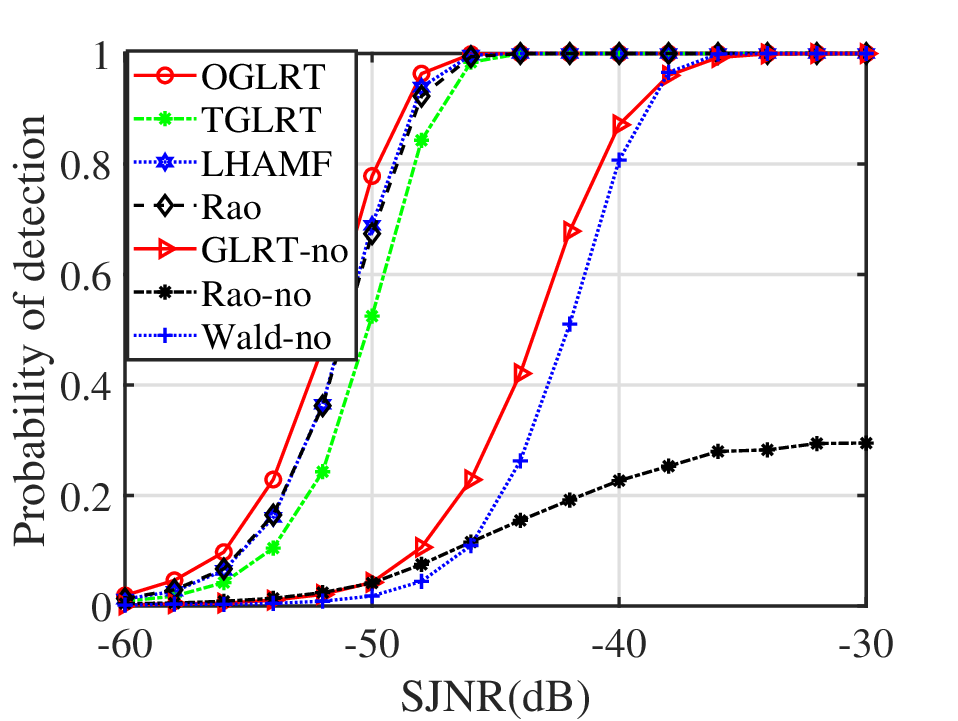}}
		\subfigure[]{\includegraphics[width=0.35\textwidth]{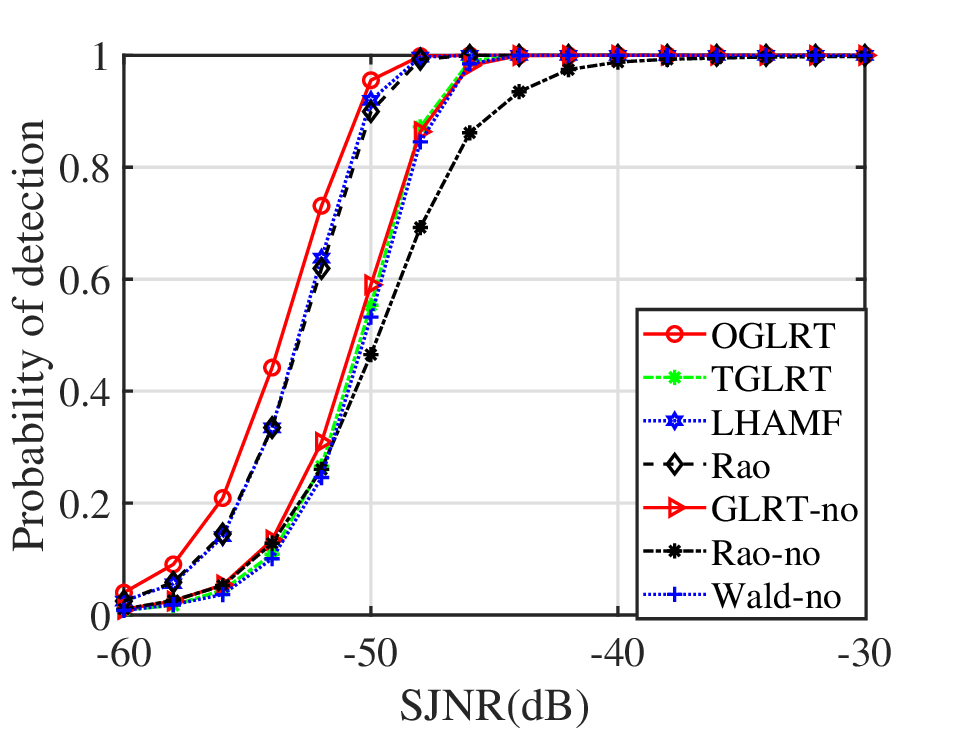}}	
		\caption{The curves of PD with and without training data versus SJNR  (a) $L=2$, $K=16$, (b) $L=1$, $K=24$.}
		\label{fig3}
	\end{figure}
	
To further demonstrate the impact of training data on detector performance, three detectors without training data, namely GLRT-no, Rao-no, and Wald-no, are introduced as comparative detectors, expressed as \cite{huang2021adaptive,GUI2020102861}	
 	\begin{equation}
    \begin{split}&
\varLambda _{GLRT-no}=\frac{\mathbf{a}_{\mathrm{TR}}^{\dagger}\left( r,\theta \right) \left( \mathbf{ZP}_{\boldsymbol{\omega }_{\mathrm{D}}^{*}\left( f_{\mathrm{d}} \right)}^{\bot}\mathbf{Z}^{\dagger} \right) ^{-1}\mathbf{a}_{\mathrm{TR}}^{}\left( r,\theta \right)}{\mathbf{a}_{\mathrm{TR}}^{\dagger}\left( r,\theta \right) \left( \tilde{\mathbf{Z}}\tilde{\mathbf{Z}}^{\dagger} \right) ^{-1}\mathbf{a}_{\mathrm{TR}}^{}\left( r,\theta \right)}\underset{\mathcal{H} _0}{\overset{\mathcal{H} _1}{\gtrless}}\lambda, 
	\end{split}
    \end{equation} 	
	\begin{equation}
	\begin{split}&
\varLambda _{Rao-no}=\frac{\left| \mathbf{a}_{\mathrm{TR}}^{\dagger}\left( r,\theta \right) \left( \mathbf{ZZ}^{\dagger} \right) ^{-1}\mathbf{Z}\boldsymbol{\omega }_{\mathrm{D}}^{*}\left( f_{\mathrm{d}} \right) \right|^2}{\mathbf{a}_{\mathrm{TR}}^{\dagger}\left( r,\theta \right) \left( \mathbf{ZZ}^{\dagger} \right) ^{-1}\mathbf{a}_{\mathrm{TR}}^{}\left( r,\theta \right) \boldsymbol{\omega }_{\mathrm{D}}^{T}\left( f_{\mathrm{d}} \right) \boldsymbol{\omega }_{\mathrm{D}}^{*}\left( f_{\mathrm{d}} \right)}
\\&
\qquad\quad\,\,\,  \underset{\mathcal{H} _0}{\overset{\mathcal{H} _1}{\gtrless}}\lambda, 
	\end{split}
	\end{equation} 
	\begin{equation}
\begin{split}&
\varLambda _{Wald-no}=\frac{\left| \mathbf{a}_{\mathrm{TR}}^{\dagger}\left( r,\theta \right) \left( \mathbf{ZP}_{\boldsymbol{\omega }_{\mathrm{D}}^{*}\left( f_{\mathrm{d}} \right)}^{\bot}\mathbf{Z}^{\dagger} \right) ^{-1}\mathbf{Z}\boldsymbol{\omega }_{\mathrm{D}}^{*}\left( f_{\mathrm{d}} \right) \right|^2}{\mathbf{a}_{\mathrm{TR}}^{\dagger}\left( r,\theta \right) \left( \mathbf{ZP}_{\boldsymbol{\omega }_{\mathrm{D}}^{*}\left( f_{\mathrm{d}} \right)}^{\bot}\mathbf{Z}^{\dagger} \right) ^{-1}\mathbf{a}_{\mathrm{TR}}^{}\left( r,\theta \right) \boldsymbol{\omega }_{\mathrm{D}}^{T}\left( f_{\mathrm{d}} \right) \boldsymbol{\omega }_{\mathrm{D}}^{*}\left( f_{\mathrm{d}} \right)}
\\&
\qquad\qquad \underset{\mathcal{H} _0}{\overset{\mathcal{H} _1}{\gtrless}}\lambda. 
	\end{split}
    \end{equation} 
    
	Fig. \ref{fig3} compares the performance of the proposed detectors with those lacking training data with $L=2$, $K=16$, and $L=1$, $K=24$.
	In Fig. \ref{fig3} (a), it is evident that the performance of the Rao detector without training data is the poorest with fewer snapshots. This result is expected, given that the smaller value of $K$ hampers the acquisition of an accurate covariance matrix.
	Contrastingly, with the support of training data, the proposed detectors demonstrate the ability to detect targets even with a limited number of snapshots, showcasing notably superior performance compared to detectors without training data. In Fig. \ref{fig3} (b), it is demonstrated that even with a more significant number of snapshots, the performance of the three detectors without training data remains inferior to the proposed detectors, which benefit from training data.
	
	\begin{figure}[htp]
		\centering
		\includegraphics[width=0.35\textwidth]{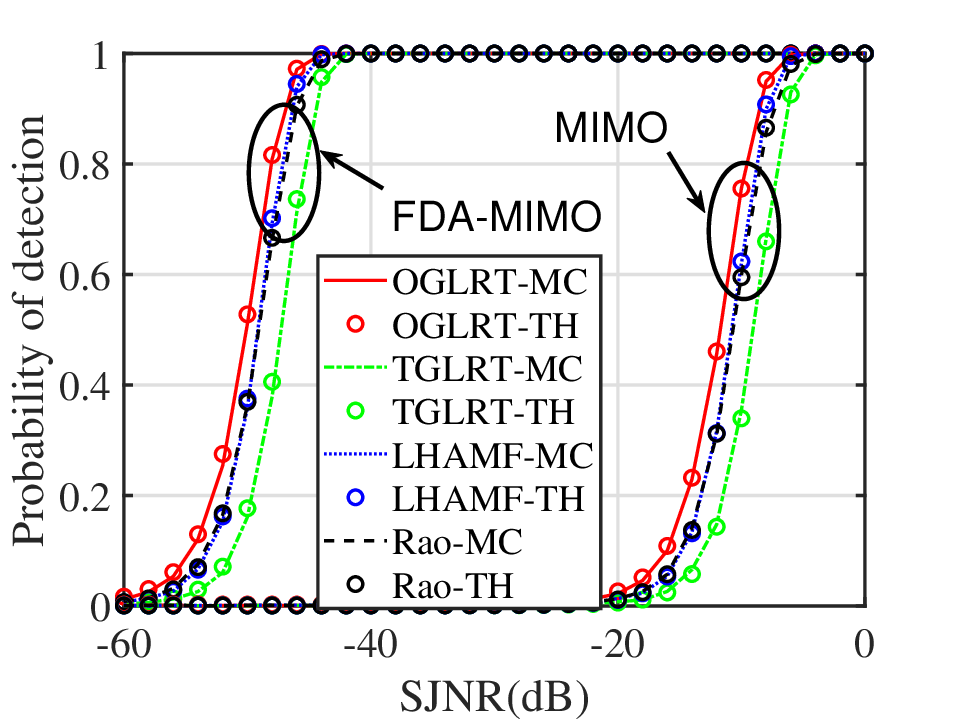}
		\caption{Comparison of detection performance of the proposed detectors for FDA-MIMO and MIMO radars, where $L=2$, $K=12$.}
		\label{fig4}	
	\end{figure}
	Furthermore, by setting the frequency offset $\bigtriangleup f=0$ in the transmitting-receiving steering vector of FDA-MIMO radar, we obtain the transmitting-receiving steering vector for MIMO radar. For the detection problem discussed in this paper, Fig. \ref{fig4} presents the PDs versus SJNR for FDA-MIMO and MIMO radar when $L=2$ and $K=12$. The results in Fig. \ref{fig4} indicate that the detectors proposed in this paper and the corresponding statistical equivalent theory also apply to MIMO radar. In other words, the MC results for MIMO radar align well with the theoretical predictions.
	Specifically, OGLRT exhibits the best performance, LHAMF and Rao detectors perform nearly identically, and TGLRT demonstrates the poorest performance. Additionally, under the given parameters, the PD of FDA-MIMO radar significantly outperforms that of MIMO radar.
	
	\begin{figure}[htp]\centering
		\subfigure[]
		{\includegraphics[width=0.35\textwidth]{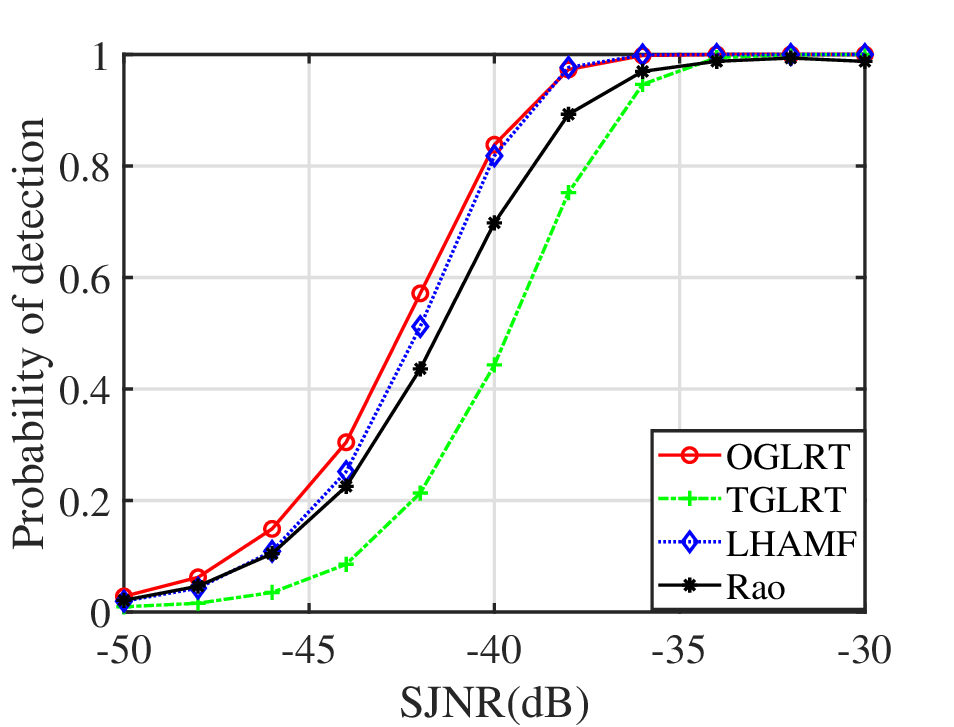}}
		\subfigure[]
		{\includegraphics[width=0.35\textwidth]{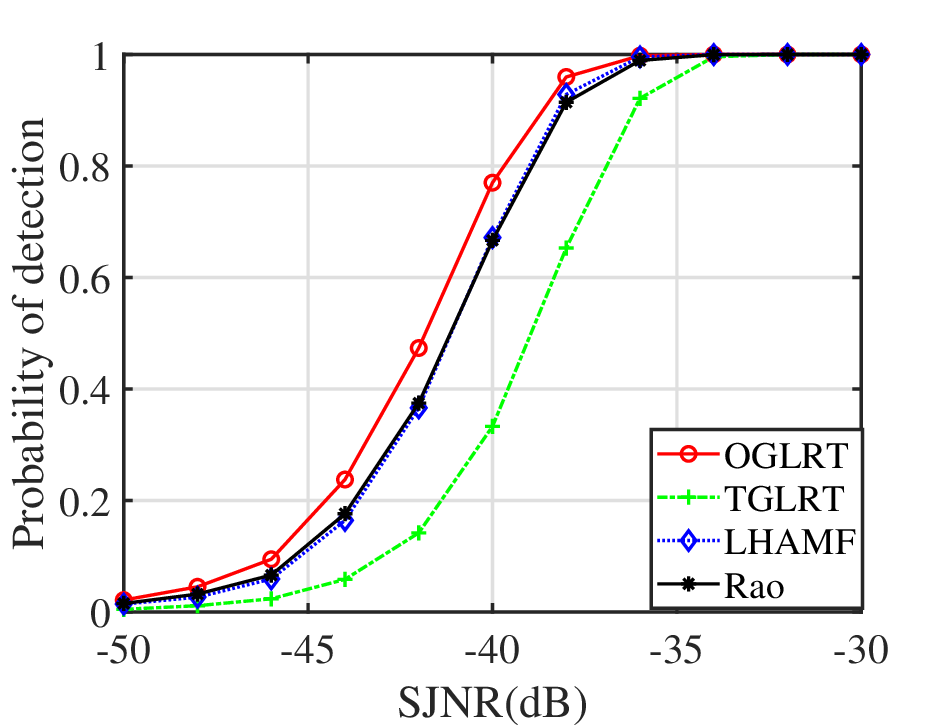}}
		\subfigure[]{
			\includegraphics[width=0.35\textwidth]{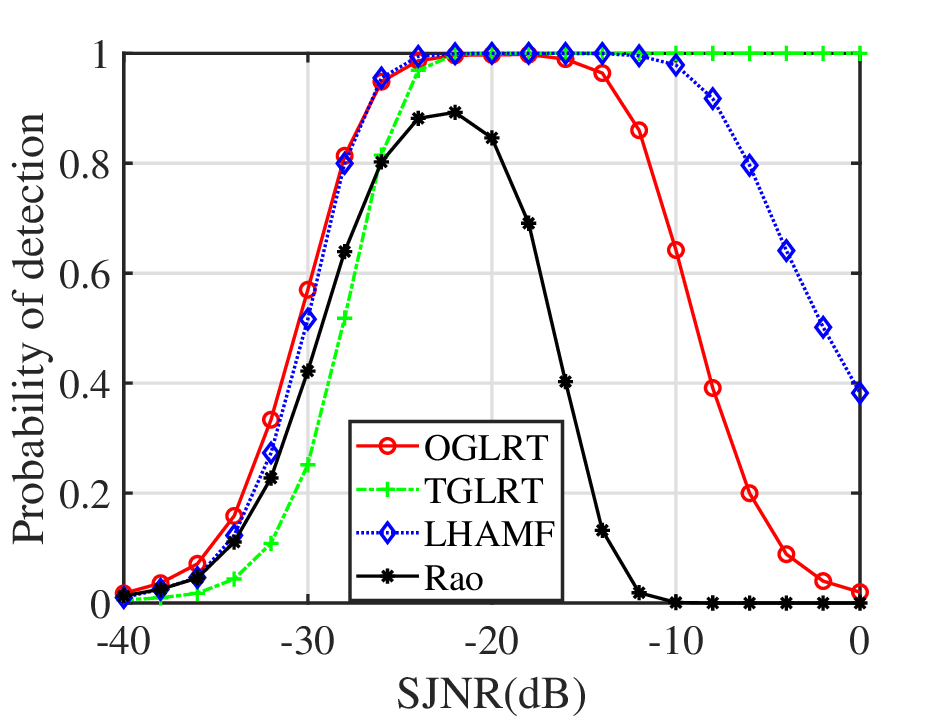}}
		\caption{PD versus SJNR when the signal mismatch, where $L=1$, $K=24$. (a) $\cos ^2\varphi=0.76$, (b) $\cos ^2\varPhi=0.76$, (c) $\cos ^2\varphi=0.76$, $\cos ^2\varPhi=0.76$.}
		\label{fig5}
	\end{figure}
	
	However, in practice, due to the errors in array calibration, waveform mismatch, and so on, the actual signal steering vector does not fully match the nominal one \cite{gu2012robust,sun2021robust,bandiera2022advanced}. To measure the error, we define $\varphi$ represent the angle between actual and nominal steering vector, the degree of mismatch can be written as\cite{bose1996adaptive,liu2014adaptive,liu2019robust,Liu2014AdaptiveHE}
	\begin{equation}
		\begin{split}
			\label{78}&
			\cos ^2\varphi =\frac{\mathbf{a}_{\mathrm{TR}}^{\dagger}\left( r,\theta \right) \mathbf{R}^{-1}\mathbf{a}_{\mathrm{TR}}\left( r_0,\theta _0 \right)}{\mathbf{a}_{\mathrm{TR}}^{\dagger}\left( r,\theta \right) \mathbf{R}^{-1}\mathbf{a}_{\mathrm{TR}}\left( r,\theta \right)}
			\\&
			\qquad \qquad\,\,         \times \frac{\mathbf{a}_{\mathrm{TR}}^{\dagger}\left( r_0,\theta _0 \right) \mathbf{R}^{-1}\mathbf{a}_{\mathrm{TR}}\left( r,\theta \right)}{\mathbf{a}_{\mathrm{TR}}^{\dagger}\left( r_0,\theta _0 \right) \mathbf{R}^{-1}\mathbf{a}_{\mathrm{TR}}\left( r_0,\theta _0 \right)},
		\end{split}
	\end{equation}
	where, $\mathbf{a}_{\mathrm{TR}}^{\dagger}\left( r,\theta \right)$ and $\mathbf{a}_{\mathrm{TR}}\left( r_0,\theta _0 \right)$
	represent the actual and nominal transmitting-receiving steering vector, respectively. Similarly, to measure the signal mismatch in the Doppler dimension, we have
	\begin{equation}
		\label{eq3-74}
		\cos ^2\varPhi =\frac{\left| \boldsymbol{\omega }_{\mathrm{D}}^{\dagger}\left( f_{\mathrm{d},0} \right) \boldsymbol{\omega }_{\mathrm{D}}^{}\left( f_{\mathrm{d}} \right) \right|^2}{\left\| \boldsymbol{\omega }_{\mathrm{D}}^{\dagger}\left( f_{\mathrm{d},0} \right) \right\| ^2\left\| \boldsymbol{\omega }_{\mathrm{D}}^{}\left( f_{\mathrm{d}} \right) \right\| ^2},
	\end{equation}
	where $\boldsymbol{\omega }_{\mathrm{D}}^{}\left( f_{\mathrm{d},0}\right)$ and $\boldsymbol{\omega }_{\mathrm{D}}^{}\left( f_{\mathrm{d}}\right)$ are the actual and nominal Doppler steering vector, respectively.
	
	Regarding mismatched signals with $L=1$ and $K=24$, Fig.\ref{fig5} illustrates the PDs of the proposed detectors against SJNR. In Fig.\ref{fig5} (a), when there is no error in the Doppler dimension and only mistakes in the transmitting-receiving steering vector that cause signal mismatch, it is evident that signal mismatch significantly degrades detection performance. OGLRT exhibits the best performance in this scenario, while TGLRT's performance remains poor. Furthermore, Fig.\ref{fig5} (b) considers only signal mismatch in the Doppler dimension\footnote{In deriving the statistical expression, the signal mismatch was not considered. This is because no available literature incorporates signal mismatch in the Doppler dimension into consideration, and the authors believe this is a topic worth investigating in future research. Therefore, the results presented in Fig.\ref{fig5}  are all obtained through MC simulations.}
	, resulting in more significant performance degradation compared to Fig.\ref{fig5} (a).
	Finally, in Fig.\ref{fig5} (c), considering signal mismatch in both the transmitting-receiving steering vector and Doppler dimensions, the PDs for three detectors, except TGLRT, are not a single-valued function of SJNR. At low SJNR, the performance of LHAMF and OGLRT is similar and slightly better than the other two detectors. In contrast, at high SJNR, TGLRT exhibits the best detection performance, followed by LHAMF, while the performance of the Rao detector is poor and may not even work.

	In summary, under no signal mismatch or only slight mismatch, the OGLRT detector is recommended, and TGLRT should be the last choice. However, in the presence of significant signal errors, TGLRT emerges as the most robust choice.
	
	\section{Conclusion}
	\label{sec6}
This study delves into the detection of a moving target in Gaussian noise with an unknown covariance matrix using FDA-MIMO radar, introducing four adaptive detectors: OGLRT, TGLRT, Rao, and LHAMF. Closed-form expressions for the PFA and PD, along with confirmation of CFAR properties, are provided.
Simulation results highlight the improved performance obtained by leveraging training data for covariance matrix estimation. The mismatch in the Doppler dimension has a more pronounced impact on detector performance compared to the mismatch in the transmitting-receiving steering vector. We recommend selecting the optimal detector based on the level of signal mismatch. OGLRT is preferable in scenarios with signal matching or slight mismatch, while TGLRT exhibits robust performance in cases of significant signal mismatch. FDA-MIMO radar demonstrates significant advantages over traditional MIMO in Gaussian noise environments with deceptive mainlobe jamming. The closed-form expressions for PFA and PD in the presence of Doppler signal mismatch can serve as a direction for future research.
	
	\appendix
	\section{OGLRT}
	\label{appA}
	\subsection{OGLRT}
	Resorting to one-step GLRT \cite{kelly1986adaptive} criterion, the OGLRT detector can be given by
	\begin{equation}
		\label{eq7}
		\varLambda =\frac{\underset{\xi ,\mathbf{R}}{\max}f\left( \mathbf{Z},\mathbf{Y}\left| \xi \right. ,\mathbf{R},\mathcal{H} _1 \right)}{\underset{\mathbf{R}}{\max}f\left( \mathbf{Z},\mathbf{Y}\left| \mathbf{R},\mathcal{H} _0 \right. \right)}\underset{\mathcal{H} _0}{\overset{\mathcal{H} _1}{\gtrless}}\lambda.
	\end{equation}
	According to \eqref{eq3}, the MLE of $\mathbf{R}$ under $\mathcal{H}_1$ is
	\begin{equation}
		\label{eq8}
		\hat{\mathbf{R}}_1=\frac{\mathcal{Z} _{1}^{}\mathcal{Z} _{1}^{\dagger}+\mathbf{S}}{K\left( L+1 \right)}.
	\end{equation}
	Define
	\begin{align}
		\label{eq11}
		\tilde{\mathbf{Z}}=&\mathbf{S}^{-1/2}\mathbf{Z},\\
		\label{eq12}
		\tilde{\mathbf{a}}=&\mathbf{S}^{-1/2}\mathbf{a}_{\mathrm{TR}}\left( r,\theta \right) .
	\end{align}
	Inserting \eqref{eq8} into \eqref{eq3}, result in
	\begin{equation}
		\label{eq9}
		\begin{split}&
			f\left( \mathbf{Z},\mathbf{Y}\left| \xi \right. ,\hat{\mathbf{R}}_1,\mathcal{H} _1 \right) =\frac{\mathrm{cont}.}{\det ^{K\left( L+1 \right)}\left( \mathcal{Z} _{1}^{}\mathcal{Z} _{1}^{\dagger}+\mathbf{S} \right)}
			\\&
			\qquad\qquad\qquad\qquad\,\,\,\,  
			=\frac{\mathrm{cont}.}{\det ^{K\left( L+1 \right)}\left( \mathbf{S} \right) \left[ \mathcal{G} \left( \xi \right) \right] ^{K\left( L+1 \right)}}.
		\end{split}
	\end{equation}
	where
	\begin{equation}
		\label{eq10}
		\begin{split}&
			\mathcal{G} \left( \xi \right) =\det \left[ \mathbf{I}+\left( \tilde{\mathbf{Z}}-\xi \tilde{\mathbf{a}}\boldsymbol{\omega }_{\mathrm{D}}^{T}\left( f_{\mathrm{d}} \right) \right) \left( \tilde{\mathbf{Z}}-\xi \tilde{\mathbf{a}}\boldsymbol{\omega }_{\mathrm{D}}^{T}\left( f_{\mathrm{d}} \right) \right) ^{\dagger} \right] 
			\\&
			\qquad=\det \left[ \mathbf{I}+\tilde{\mathbf{Z}}\mathbf{P}_{\boldsymbol{\omega }_{\mathrm{D}}^{*}\left( f_{\mathrm{d}} \right)}^{\bot}\tilde{\mathbf{Z}}^{\dagger} \right] \det \left[ \mathbf{I}+\boldsymbol{\omega }_{\mathrm{D}}^{T}\left( f_{\mathrm{d}} \right) \boldsymbol{\omega }_{\mathrm{D}}^{*}\left( f_{\mathrm{d}} \right) \right. 
			\\&
			\qquad\,\,\,\left. \times \left( \xi \tilde{\mathbf{a}}-\frac{\tilde{\mathbf{Z}}\boldsymbol{\omega }_{\mathrm{D}}^{*}\left( f_{\mathrm{d}} \right)}{\boldsymbol{\omega }_{\mathrm{D}}^{T}\left( f_{\mathrm{d}} \right) \boldsymbol{\omega }_{\mathrm{D}}^{*}\left( f_{\mathrm{d}} \right)} \right) \left( \xi \tilde{\mathbf{a}}-\frac{\tilde{\mathbf{Z}}\boldsymbol{\omega }_{\mathrm{D}}^{*}\left( f_{\mathrm{d}} \right)}{\boldsymbol{\omega }_{\mathrm{D}}^{T}\left( f_{\mathrm{d}} \right) \boldsymbol{\omega }_{\mathrm{D}}^{*}\left( f_{\mathrm{d}} \right)} \right) ^{\dagger} \right] 
			\\&
			\qquad=\det \left[ \mathbf{I}+\tilde{\mathbf{Z}}\mathbf{P}_{\boldsymbol{\omega }_{\mathrm{D}}^{*}\left( f_{\mathrm{d}} \right)}^{\bot}\tilde{\mathbf{Z}}^{\dagger} \right] h\left( \xi \right) ,
		\end{split}
	\end{equation}
	
	and
	\begin{equation}
		\label{eq13}
		\begin{split}&
			h\left( \xi \right) =1+\boldsymbol{\omega }_{\mathrm{D}}^{T}\left( f_{\mathrm{d}} \right) \boldsymbol{\omega }_{\mathrm{D}}^{*}\left( f_{\mathrm{d}} \right) \left( \xi \tilde{\mathbf{a}}-\frac{\tilde{\mathbf{Z}}\boldsymbol{\omega }_{\mathrm{D}}^{*}\left( f_{\mathrm{d}} \right)}{\boldsymbol{\omega }_{\mathrm{D}}^{T}\left( f_{\mathrm{d}} \right) \boldsymbol{\omega }_{\mathrm{D}}^{*}\left( f_{\mathrm{d}} \right)} \right) ^{\dagger}
			\\&
			\qquad\,\,\,\,\,\times \left( \mathbf{I}+\tilde{\mathbf{Z}}\mathbf{P}_{\boldsymbol{\omega }_{\mathrm{D}}^{*}\left( f_{\mathrm{d}} \right)}^{\bot}\tilde{\mathbf{Z}}^{\dagger} \right) ^{-1}\left( \xi \tilde{\mathbf{a}}-\frac{\tilde{\mathbf{Z}}\boldsymbol{\omega }_{\mathrm{D}}^{*}\left( f_{\mathrm{d}} \right)}{\boldsymbol{\omega }_{\mathrm{D}}^{T}\left( f_{\mathrm{d}} \right) \boldsymbol{\omega }_{\mathrm{D}}^{*}\left( f_{\mathrm{d}} \right)} \right).
		\end{split}
	\end{equation}
	To find the MLE of $\xi$ in \eqref{eq9}, taking \eqref{eq13} w.r.t. $\xi$, yields
	\begin{equation}
		\begin{split}
			\label{eq14}&
			\frac{\partial h\left( \xi \right)}{\partial \xi}=\boldsymbol{\omega }_{\mathrm{D}}^{T}\left( f_{\mathrm{d}} \right) \boldsymbol{\omega }_{\mathrm{D}}^{*}\left( f_{\mathrm{d}} \right) \left( \xi \mathbf{\tilde{a}}-\frac{\mathbf{\tilde{Z}}\boldsymbol{\omega }_{\mathrm{D}}^{*}\left( f_{\mathrm{d}} \right)}{\boldsymbol{\omega }_{\mathrm{D}}^{T}\left( f_{\mathrm{d}} \right) \boldsymbol{\omega }_{\mathrm{D}}^{*}\left( f_{\mathrm{d}} \right)} \right) ^{\dagger}
			\\&
			\qquad \qquad \,\,\times \left( \mathbf{I}+\mathbf{\tilde{Z}P}_{\boldsymbol{\omega }_{\mathrm{D}}^{*}\left( f_{\mathrm{d}} \right)}^{\bot}\mathbf{\tilde{Z}}^{\dagger} \right) ^{-1}\mathbf{\tilde{a}}.
		\end{split}
	\end{equation}
	Setting the result of \eqref{eq14} to zero, we can obtain the MLE of $\xi$ under $\mathcal{H}_1$, as
	\begin{equation}
		\label{eq15}
		\hat{\xi}=\frac{\mathbf{\tilde{a}}^{\dagger}\left( \mathbf{I}+\mathbf{\tilde{Z}P}_{\boldsymbol{\omega }_{\mathrm{D}}^{*}\left( f_{\mathrm{d}} \right)}^{\bot}\mathbf{\tilde{Z}}^{\dagger} \right) ^{-1}\mathbf{\tilde{Z}}\boldsymbol{\omega }_{\mathrm{D}}^{*}\left( f_{\mathrm{d}} \right)}{\mathbf{\tilde{a}}^{\dagger}\left( \mathbf{I}+\mathbf{\tilde{Z}P}_{\boldsymbol{\omega }_{\mathrm{D}}^{*}\left( f_{\mathrm{d}} \right)}^{\bot}\mathbf{\tilde{Z}}^{\dagger} \right) ^{-1}\mathbf{\tilde{a}}\boldsymbol{\omega }_{\mathrm{D}}^{T}\left( f_{\mathrm{d}} \right) \boldsymbol{\omega }_{\mathrm{D}}^{*}\left( f_{\mathrm{d}} \right)}.
	\end{equation}
	Sustituting \eqref{eq15} into \eqref{eq13}, leads to
	\begin{equation}
		\label{eq16}
		\begin{split}&
			h\left( \hat{\xi} \right) =1-\frac{\left| \tilde{\mathbf{a}}^{\dagger}\left( \mathbf{I}+\tilde{\mathbf{Z}}\mathbf{P}_{\boldsymbol{\omega }_{\mathrm{D}}^{*}\left( f_{\mathrm{d}} \right)}^{\bot}\tilde{\mathbf{Z}}^{\dagger} \right) ^{-1}\tilde{\mathbf{Z}}\boldsymbol{\omega }_{\mathrm{D}}^{*}\left( f_{\mathrm{d}} \right) \right|^2}{\tilde{\mathbf{a}}^{\dagger}\left( \mathbf{I}+\tilde{\mathbf{Z}}\mathbf{P}_{\boldsymbol{\omega }_{\mathrm{D}}^{*}\left( f_{\mathrm{d}} \right)}^{\bot}\tilde{\mathbf{Z}}^{\dagger} \right) ^{-1}\tilde{\mathbf{a}}\boldsymbol{\omega }_{\mathrm{D}}^{T}\left( f_{\mathrm{d}} \right) \boldsymbol{\omega }_{\mathrm{D}}^{*}\left( f_{\mathrm{d}} \right)}
			\\&
			\qquad\qquad\,\,  +\frac{\boldsymbol{\omega }_{\mathrm{D}}^{T}\left( f_{\mathrm{d}} \right) \tilde{\mathbf{Z}}^{\dagger}\left( \mathbf{I}+\tilde{\mathbf{Z}}\mathbf{P}_{\boldsymbol{\omega }_{\mathrm{D}}^{*}\left( f_{\mathrm{d}} \right)}^{\bot}\tilde{\mathbf{Z}}^{\dagger} \right) ^{-1}\tilde{\mathbf{Z}}\boldsymbol{\omega }_{\mathrm{D}}^{*}\left( f_{\mathrm{d}} \right)}{\boldsymbol{\omega }_{\mathrm{D}}^{T}\left( f_{\mathrm{d}} \right) \boldsymbol{\omega }_{\mathrm{D}}^{*}\left( f_{\mathrm{d}} \right)}
			\\&
			\qquad \,\,    =1+\frac{\boldsymbol{\omega }_{\mathrm{D}}^{T}\left( f_{\mathrm{d}} \right) \tilde{\mathbf{Z}}^{\dagger}\left( \mathbf{I}+\tilde{\mathbf{Z}}\mathbf{P}_{\boldsymbol{\omega }_{\mathrm{D}}^{*}\left( f_{\mathrm{d}} \right)}^{\bot}\tilde{\mathbf{Z}}^{\dagger} \right) ^{-1}}{\tilde{\mathbf{a}}^{\dagger}\left( \mathbf{I}+\tilde{\mathbf{Z}}\mathbf{P}_{\boldsymbol{\omega }_{\mathrm{D}}^{*}\left( f_{\mathrm{d}} \right)}^{\bot}\tilde{\mathbf{Z}}^{\dagger} \right) ^{-1}\tilde{\mathbf{a}}}
			\\&
			\qquad\qquad \times \left[ \tilde{\mathbf{a}}^{\dagger}\left( \mathbf{I}+\tilde{\mathbf{Z}}\mathbf{P}_{\boldsymbol{\omega }_{\mathrm{D}}^{*}\left( f_{\mathrm{d}} \right)}^{\bot}\tilde{\mathbf{Z}}^{\dagger} \right) ^{-1}\tilde{\mathbf{a}}\mathbf{I} \right. 
			\\&
			\qquad \qquad \qquad -\left. \tilde{\mathbf{a}}\tilde{\mathbf{a}}^{\dagger}\left( \mathbf{I}+\tilde{\mathbf{Z}}\mathbf{P}_{\boldsymbol{\omega }_{\mathrm{D}}^{*}\left( f_{\mathrm{d}} \right)}^{\bot}\tilde{\mathbf{Z}}^{\dagger} \right) ^{-1} \right] 
			\\&
			\qquad\qquad \times \frac{\tilde{\mathbf{Z}}\boldsymbol{\omega }_{\mathrm{D}}^{*}\left( f_{\mathrm{d}} \right)}{\boldsymbol{\omega }_{\mathrm{D}}^{T}\left( f_{\mathrm{d}} \right) \boldsymbol{\omega }_{\mathrm{D}}^{*}\left( f_{\mathrm{d}} \right)}.
		\end{split}
	\end{equation}
	Further, \eqref{eq10} can be written as  \eqref{eq270}, which is located on the first line of the next page.
	\begin{figure*}[htp]
		\begin{equation}
			\label{eq270}
			\underset{\text{\underline{\hspace{18cm}}}}{
				\begin{split}&
					\mathcal{G} \left( \hat{\xi} \right) =\det \left[ \mathbf{I}+\tilde{\mathbf{Z}}\tilde{\mathbf{Z}}^{\dagger}-\frac{\tilde{\mathbf{a}}\tilde{\mathbf{a}}^{\dagger}\left( \mathbf{I}+\tilde{\mathbf{Z}}\mathbf{P}_{\boldsymbol{\omega }_{\mathrm{D}}^{*}\left( f_{\mathrm{d}} \right)}^{\bot}\tilde{\mathbf{Z}}^{\dagger} \right) ^{-1}\tilde{\mathbf{Z}}\mathbf{P}_{\boldsymbol{\omega }_{\mathrm{D}}^{*}\left( f_{\mathrm{d}} \right)}^{}\tilde{\mathbf{Z}}^{\dagger}}{\tilde{\mathbf{a}}^{\dagger}\left( \mathbf{I}+\tilde{\mathbf{Z}}\mathbf{P}_{\boldsymbol{\omega }_{\mathrm{D}}^{*}\left( f_{\mathrm{d}} \right)}^{\bot}\tilde{\mathbf{Z}}^{\dagger} \right) ^{-1}\tilde{\mathbf{a}}} \right] 
					\\&
					\qquad\,  =\det \left[ \mathbf{I}+\tilde{\mathbf{Z}}\tilde{\mathbf{Z}}^{\dagger} \right] \det \left[ \mathbf{I}-\frac{\left( \mathbf{I}+\tilde{\mathbf{Z}}\tilde{\mathbf{Z}}^{\dagger} \right) ^{-1}\tilde{\mathbf{a}}\tilde{\mathbf{a}}^{\dagger}\left( \mathbf{I}+\tilde{\mathbf{Z}}\mathbf{P}_{\boldsymbol{\omega }_{\mathrm{D}}^{*}\left( f_{\mathrm{d}} \right)}^{\bot}\tilde{\mathbf{Z}}^{\dagger} \right) ^{-1}\tilde{\mathbf{Z}}\mathbf{P}_{\boldsymbol{\omega }_{\mathrm{D}}^{*}\left( f_{\mathrm{d}} \right)}^{}\tilde{\mathbf{Z}}^{\dagger}}{\tilde{\mathbf{a}}^{\dagger}\left( \mathbf{I}+\tilde{\mathbf{Z}}\mathbf{P}_{\boldsymbol{\omega }_{\mathrm{D}}^{*}\left( f_{\mathrm{d}} \right)}^{\bot}\tilde{\mathbf{Z}}^{\dagger} \right) ^{-1}\tilde{\mathbf{a}}} \right] 
					\\&
					\qquad\,    =\det \left[ \mathbf{I}+\tilde{\mathbf{Z}}\tilde{\mathbf{Z}}^{\dagger} \right] \left[ 1-\frac{\tilde{\mathbf{a}}^{\dagger}\left( \mathbf{I}+\tilde{\mathbf{Z}}\mathbf{P}_{\boldsymbol{\omega }_{\mathrm{D}}^{*}\left( f_{\mathrm{d}} \right)}^{\bot}\tilde{\mathbf{Z}}^{\dagger} \right) ^{-1}\tilde{\mathbf{Z}}\mathbf{P}_{\boldsymbol{\omega }_{\mathrm{D}}^{*}\left( f_{\mathrm{d}} \right)}^{}\tilde{\mathbf{Z}}^{\dagger}\left( \mathbf{I}+\tilde{\mathbf{Z}}\tilde{\mathbf{Z}}^{\dagger} \right) ^{-1}\tilde{\mathbf{a}}}{\tilde{\mathbf{a}}^{\dagger}\left( \mathbf{I}+\tilde{\mathbf{Z}}\mathbf{P}_{\boldsymbol{\omega }_{\mathrm{D}}^{*}\left( f_{\mathrm{d}} \right)}^{\bot}\tilde{\mathbf{Z}}^{\dagger} \right) ^{-1}\tilde{\mathbf{a}}} \right] 
					\\&
					\qquad\,     =\det \left[ \mathbf{I}+\tilde{\mathbf{Z}}\tilde{\mathbf{Z}}^{\dagger} \right] \frac{\tilde{\mathbf{a}}^{\dagger}\left( \mathbf{I}+\tilde{\mathbf{Z}}\tilde{\mathbf{Z}}^{\dagger} \right) ^{-1}\tilde{\mathbf{a}}}{\tilde{\mathbf{a}}^{\dagger}\left( \mathbf{I}+\tilde{\mathbf{Z}}\mathbf{P}_{\boldsymbol{\omega }_{\mathrm{D}}^{*}\left( f_{\mathrm{d}} \right)}^{\bot}\tilde{\mathbf{Z}}^{\dagger} \right) ^{-1}\tilde{\mathbf{a}}}.
			\end{split}}
		\end{equation}	     
	\end{figure*}  

	Similarly, we can obtain the MLE of $\mathbf{R}$ under the null hypothesis $\mathcal{H}_0$
	\begin{equation}
		\label{eq18}
		\hat{\mathbf{R}}_0=\frac{\mathbf{ZZ}_{}^{\dagger}+\mathbf{S}}{K\left( L+1 \right)}.
	\end{equation}
	Then, we have
	\begin{equation}
		\label{eq19}
		\begin{split}
			f\left( \mathbf{Z},\mathbf{Y}\left| \hat{\mathbf{R}}_0,\mathcal{H} _0 \right. \right) =&\frac{\mathrm{cont}.}{\det ^{K\left(L+1\right)}\left( \mathbf{ZZ}_{}^{\dagger}+\mathbf{S} \right)}
			\\
			=&\frac{\mathrm{cont}.}{\det ^{K\left(L+1\right)}\left( \mathbf{S} \right) \det ^{KL}\left( \tilde{\mathbf{Z}}\tilde{\mathbf{Z}}_{}^{\dagger}+\mathbf{I} \right)}.
		\end{split}
	\end{equation}
	Combining \eqref{eq9}, \eqref{eq270} and \eqref{eq19} yields  the OGLRT detector \eqref{eq200}.

	\bibliographystyle{IEEEtran}
	\bibliography{ref}

\begin{thebibliography}{10}
\providecommand{\url}[1]{#1}
\csname url@samestyle\endcsname
\providecommand{\newblock}{\relax}
\providecommand{\bibinfo}[2]{#2}
\providecommand{\BIBentrySTDinterwordspacing}{\spaceskip=0pt\relax}
\providecommand{\BIBentryALTinterwordstretchfactor}{4}
\providecommand{\BIBentryALTinterwordspacing}{\spaceskip=\fontdimen2\font plus
\BIBentryALTinterwordstretchfactor\fontdimen3\font minus \fontdimen4\font\relax}
\providecommand{\BIBforeignlanguage}[2]{{%
\expandafter\ifx\csname l@#1\endcsname\relax
\typeout{** WARNING: IEEEtran.bst: No hyphenation pattern has been}%
\typeout{** loaded for the language `#1'. Using the pattern for}%
\typeout{** the default language instead.}%
\else
\language=\csname l@#1\endcsname
\fi
#2}}
\providecommand{\BIBdecl}{\relax}
\BIBdecl

\bibitem{Antonik2006Frequency}
P.~Antonik, M.~Wicks, H.~Griffiths, and C.~Baker, ``Frequency diverse array radars,'' pp. 3 pp.--, 2006.

\bibitem{Wang2013Range}
W.~{Wang}, ``Range-angle dependent transmit beampattern synthesis for linear frequency diverse arrays,'' \emph{IEEE Transactions on Antennas and Propagation}, vol.~61, no.~8, pp. 4073--4081, 2013.

\bibitem{Lang2022Lamb}
Y.~Lang, Z.~Yang, L.~Yang, and X.~Chen, ``Lamb wave frequency diverse array,'' \emph{IEEE Transactions on Ultrasonics, Ferroelectrics, and Frequency Control}, pp. 1--1, 2022.

\bibitem{Gui2022Generalized}
R.~Gui, B.~Huang, W.-Q. Wang, and Y.~Sun, ``Generalized ambiguity function for {FDA} radar joint range, angle and doppler resolution evaluation,'' \emph{IEEE Geoscience and Remote Sensing Letters}, vol.~19, pp. 1--5, 2022.

\bibitem{Huang2022RCS}
B.~Huang, Y.~Yan, A.~Basit, W.-Q. Wang, and J.~Cheng, ``Radar cross section characterization of frequency diverse array radar,'' \emph{IEEE Transactions on Aerospace and Electronic Systems}, pp. 1--11, 2022.

\bibitem{Liao2020Frequency}
Y.~{Liao}, H.~{Tang}, X.~{Chen}, and W.~Q. {Wang}, ``Frequency diverse array beampattern synthesis with taylor windowed frequency offsets,'' \emph{IEEE Antennas and Wireless Propagation Letters}, vol.~19, no.~11, pp. 1901--1905, 2020.

\bibitem{Liao2022ALow}
Y.~Liao, H.~Tang, W.-q. Wang, and M.~Xing, ``A low sidelobe deceptive jamming suppression beamforming method with a frequency diverse array,'' \emph{IEEE Transactions on Antennas and Propagation}, vol.~70, no.~6, pp. 4884--4889, 2022.

\bibitem{Basit2019Adaptive}
A.~{Basit}, W.~{Wang}, and S.~Y. {Nusenu}, ``Adaptive transmit beamspace design for cognitive {FDA} radar tracking,'' \emph{IET Radar, Sonar Navigation}, vol.~13, no.~12, pp. 2083--2092, 2019.

\bibitem{Wang2021LPI}
L.~Wang, W.-Q. Wang, H.~Guan, and S.~Zhang, ``{LPI} property of {FDA} transmitted signal,'' \emph{IEEE Transactions on Aerospace and Electronic Systems}, vol.~57, no.~6, pp. 3905--3915, 2021.

\bibitem{Hu2022Adaptive}
Y.-Q. Hu, H.~Chen, S.-L. Ji, W.-Q. Wang, and H.~Chen, ``Adaptive detector for {FDA-B}ased ambient backscatter communications,'' \emph{IEEE Transactions on Wireless Communications}, vol.~21, no.~12, pp. 10\,381--10\,392, 2022.

\bibitem{Nusenu2022Power}
S.~Y. Nusenu, S.~Huaizong, and P.~Ye, ``Power allocation and equivalent transmit {FDA} beamspace for {5G} mmwave {NOMA} networks: Meta-heuristic optimization approach,'' \emph{IEEE Transactions on Vehicular Technology}, vol.~71, no.~9, pp. 9635--9646, 2022.

\bibitem{huang2021adaptive}
B.~Huang, A.~Basit, R.~Gui, and W.-Q. Wang, ``Adaptive moving target detection without training data for {FDA-MIMO} radar,'' \emph{IEEE Transactions on Vehicular Technology}, vol.~71, no.~1, pp. 220--232, 2021.

\bibitem{lan2020glrt}
L.~Lan, A.~Marino, A.~Aubry, A.~De~Maio, G.~Liao, J.~Xu, and Y.~Zhang, ``{GLRT}-based adaptive target detection in {FDA-MIMO} radar,'' \emph{IEEE Transactions on Aerospace and Electronic Systems}, vol.~57, no.~1, pp. 597--613, 2020.

\bibitem{XuLiaoRobust17}
J.~{Xu}, G.~{Liao}, L.~{Huang}, and H.~C. {So}, ``Robust adaptive beamforming for fast-moving target detection with {FDA-STAP} radar,'' vol.~65, no.~4, pp. 973--984, 2017.

\bibitem{fishler2004mimo}
E.~Fishler, A.~Haimovich, R.~Blum, D.~Chizhik, L.~Cimini, and R.~Valenzuela, ``Mimo radar: An idea whose time has come,'' in \emph{Proceedings of the 2004 IEEE Radar Conference (IEEE Cat. No. 04CH37509)}.\hskip 1em plus 0.5em minus 0.4em\relax IEEE, 2004, pp. 71--78.

\bibitem{JianLiMIMORadar2009}
J.~Li and S.~Petre, \emph{{MIMO} Radar Signal Processing}.\hskip 1em plus 0.5em minus 0.4em\relax New York: John Wiley and Sons, 2009.

\bibitem{Xu2015Deceptive}
L.~{Lan}, G.~{Liao}, S.~{Zhu}, and H.C.{So}, ``Deceptive jamming suppression with frequency diverse {MIMO} radar,'' \emph{Signal Processing}, vol. 113, no. aug., pp. 9--17, 2015.

\bibitem{XiongWangFDA18}
J.~{Xiong}, W.~{Wang}, and K.~{Gao}, ``{FDA-MIMO} radar range–angle estimation: {CRLB, MSE}, and resolution analysis,'' \emph{IEEE Transactions on Aerospace and Electronic Systems}, vol.~54, no.~1, pp. 284--294, 2018.

\bibitem{TanMing2021Correction}
M.~Tan, C.~Wang, and Z.~Li, ``Correction analysis of frequency diverse array radar about time,'' \emph{IEEE Transactions on Antennas and Propagation}, vol.~69, no.~2, pp. 834--847, 2021.

\bibitem{LanLan2020Suppression}
L.~Lan, J.~Xu, G.~Liao, Y.~Zhang, F.~Fioranelli, and H.~C. So, ``Suppression of mainbeam deceptive jammer with {FDA-MIMO} radar,'' \emph{IEEE Transactions on Vehicular Technology}, vol.~69, no.~10, pp. 11\,584--11\,598, 2020.

\bibitem{Xu2015JointRange}
J.~Xu, G.~Liao, S.~Zhu, L.~Huang, and H.~C. So, ``Joint range and angle estimation using {MIMO} radar with frequency diverse array,'' \emph{IEEE Transactions on Signal Processing}, vol.~63, no.~13, pp. 3396--3410, 2015.

\bibitem{ChengJie2021Joint}
J.~Cheng, W.-Q. Wang, and S.~Zhang, ``Joint {MIMO} and frequency diverse array for suppressing mainlobe interferences,'' in \emph{2020 International Symposium on Antennas and Propagation (ISAP)}.\hskip 1em plus 0.5em minus 0.4em\relax IEEE, 2021, pp. 171--172.

\bibitem{Wan2022Resolving}
W.~Wan, S.~Zhang, and W.-Q. Wang, ``Resolving doppler ambiguity of high-speed moving targets via {FDA-MIMO} radar,'' \emph{IEEE Geoscience and Remote Sensing Letters}, vol.~19, pp. 1--5, 2022.

\bibitem{Zhang2022High}
M.~Zhang, G.~Liao, J.~Xu, X.~He, Q.~Liu, L.~Lan, and S.~Li, ``High-resolution and wide-swath {SAR} imaging with sub-band frequency diverse array,'' \emph{IEEE Transactions on Aerospace and Electronic Systems}, vol.~59, no.~1, pp. 172--183, 2022.

\bibitem{HuangLi2022Frequency}
L.~Huang, X.~Li, W.~Wan, S.~Zhang, and W.-Q. Wang, ``Frequency diverse array introduced into {SAR GMTI} to mitigate blind velocity and doppler ambiguity,'' \emph{IEEE Geoscience and Remote Sensing Letters}, vol.~19, pp. 1--5, 2022.

\bibitem{Huang2021FDABased}
B.~Huang, W.-Q. Wang, S.~Zhang, and Y.~Liao, ``{FDA-Based} space-time-frequency deceptive jamming against {SAR} imaging,'' \emph{IEEE Transactions on Aerospace and Electronic Systems}, vol.~58, no.~3, pp. 2127--2140, 2021.

\bibitem{Jian2022physical}
J.~Jian, B.~Huang, and W.-Q. Wang, ``Physical-layer security with frequency diverse array for {DF} multi-antenna relaying {SWIPT} system,'' \emph{International Journal of Electronics Letters}, vol.~0, no.~0, pp. 1--9, 2022.

\bibitem{Ji2019On}
S.~{Ji}, W.~{Wang}, H.~{Chen}, and S.~{Zhang}, ``On physical-layer security of {FDA} communications over rayleigh fading channels,'' \emph{IEEE Transactions on Cognitive Communications and Networking}, vol.~5, no.~3, pp. 476--490, 2019.

\bibitem{Wang2021Clutter}
K.~Wang, G.~Liao, J.~Xu, Y.~Zhang, and L.~Huang, ``Clutter rank analysis in airborne {FDA-MIMO} radar with range ambiguity,'' \emph{IEEE Transactions on Aerospace and Electronic Systems}, vol.~58, no.~2, pp. 1416--1430, 2021.

\bibitem{huang2022adaptive}
B.~Huang, J.~Jian, A.~Basit, R.~Gui, and W.-Q. Wang, ``Adaptive distributed target detection for {FDA-MIMO} radar in gaussian clutter without training data,'' \emph{IEEE Transactions on Aerospace and Electronic Systems}, vol.~58, no.~4, pp. 2961--2972, 2022.

\bibitem{gui2021fda}
R.~Gui, W.-Q. Wang, A.~Farina, and H.~C. So, ``Fda radar with doppler-spreading consideration: Mainlobe clutter suppression for blind-doppler target detection,'' \emph{Signal Processing}, vol. 179, p. 107773, 2021.

\bibitem{Huang2022AdaptiveBa}
B.~Huang, A.~Basit, W.-Q. Wang, and S.~Zhang, ``Adaptive detection with bayesian framework for fda-mimo radar,'' \emph{IEEE Geoscience and Remote Sensing Letters}, vol.~19, pp. 1--5, 2022.

\bibitem{xu2014space}
J.~Xu, S.~Zhu, and G.~Liao, ``Space-time-range adaptive processing for airborne radar systems,'' \emph{IEEE Sensors Journal}, vol.~15, no.~3, pp. 1602--1610, 2014.

\bibitem{xu2015range}
------, ``Range ambiguous clutter suppression for airborne {FDA-STAP} radar,'' \emph{IEEE Journal of Selected Topics in Signal Processing}, vol.~9, no.~8, pp. 1620--1631, 2015.

\bibitem{xu2016space}
J.~Xu, G.~Liao, and H.~C. So, ``Space-time adaptive processing with vertical frequency diverse array for range-ambiguous clutter suppression,'' \emph{IEEE Transactions on Geoscience and Remote Sensing}, vol.~54, no.~9, pp. 5352--5364, 2016.

\bibitem{Zhu2023Target}
J.~Zhu, S.~Zhu, J.~Xu, L.~Lan, K.~Yu, and Y.~Xu, ``Target detection in mainlobe jammers with fda-mimo radar,'' in \emph{2023 IEEE International Radar Conference (RADAR)}, 2023, pp. 1--6.

\bibitem{Zeng2022GLRTbasedDF}
\BIBentryALTinterwordspacing
L.~Zeng, Y.-L. Wang, W.~Liu, J.~Liu, and L.~Lan, ``{GLRT}-based detectors for fda-mimo radar with training data,'' \emph{Digit. Signal Process.}, vol. 130, p. 103729, 2022. [Online]. Available: \url{https://api.semanticscholar.org/CorpusID:252230962}
\BIBentrySTDinterwordspacing

\bibitem{liu2014Parametric}
W.~Liu, W.~Xie, and Y.~Wang, ``Parametric detector in the situation of mismatched signals,'' \emph{IET Radar, Sonar and Navigation}, vol.~8, pp. 48--53, 2014.

\bibitem{de2007rao}
A.~De~Maio, ``Rao test for adaptive detection in {G}aussian interference with unknown covariance matrix,'' \emph{IEEE transactions on signal processing}, vol.~55, no.~7, pp. 3577--3584, 2007.

\bibitem{de2004new}
------, ``A new derivation of the adaptive matched filter,'' \emph{IEEE Signal Processing Letters}, vol.~11, no.~10, pp. 792--793, 2004.

\bibitem{robey1992cfar}
F.~C. Robey, D.~R. Fuhrmann, E.~J. Kelly, and R.~Nitzberg, ``A {CFAR} adaptive matched filter detector,'' \emph{IEEE Transactions on aerospace and electronic systems}, vol.~28, no.~1, pp. 208--216, 1992.

\bibitem{liu2019robust}
J.~Liu and J.~Li, ``Robust detection in mimo radar with steering vector mismatches,'' \emph{IEEE Transactions on Signal Processing}, vol.~67, no.~20, pp. 5270--5280, 2019.

\bibitem{Liu2014AdaptiveHE}
W.~Liu, W.~Xie, J.~Liu, and Y.~Wang, ``Adaptive double subspace signal detection in gaussian background—part i: Homogeneous environments,'' \emph{IEEE Transactions on Signal Processing}, vol.~62, no.~9, pp. 2345--2357, 2014.

\bibitem{kelly1989adaptive}
E.~J. Kelly and K.~W. Forsythe, ``Adaptive detection and parameter estimation for multidimensional signal models,'' 1989.

\bibitem{gui2017coherent}
R.~Gui, W.-Q. Wang, C.~Cui, and H.~C. So, ``Coherent pulsed-{FDA} radar receiver design with time-variance consideration: {SINR} and {CRB} analysis,'' \emph{IEEE Transactions on Signal Processing}, vol.~66, no.~1, pp. 200--214, 2017.

\bibitem{GUI2020102861}
\BIBentryALTinterwordspacing
R.~Gui, W.-Q. Wang, and Z.~Zheng, ``Low-complexity {GLRT} for {FDA} radar without training data,'' \emph{Digital Signal Processing}, vol. 107, p. 102861, 2020. [Online]. Available: \url{https://www.sciencedirect.com/science/article/pii/S1051200420302062}
\BIBentrySTDinterwordspacing

\bibitem{kelly1986adaptive}
E.~J. Kelly, ``An adaptive detection algorithm,'' \emph{IEEE transactions on aerospace and electronic systems}, no.~2, pp. 115--127, 1986.

\bibitem{kay1993fundamentals}
S.~M. Kay, \emph{Fundamentals of statistical signal processing: estimation theory}.\hskip 1em plus 0.5em minus 0.4em\relax Prentice-Hall, Inc., 1993.

\bibitem{zhang2017matrix}
X.-D. Zhang, \emph{Matrix analysis and applications}.\hskip 1em plus 0.5em minus 0.4em\relax Cambridge University Press, 2017.

\bibitem{DeMaioGreco2015ModernRadar}
A.~{De Maio}, M.~S. {Greco}, and {Eds}, \emph{Modern Radar Detection Theory}.\hskip 1em plus 0.5em minus 0.4em\relax Radar, Sonar and Navigation. Stevenage, U.K.: Inst. Eng. Technol, 2015.

\bibitem{bandiera2022advanced}
F.~Bandiera, D.~Orlando, and G.~Ricci, \emph{Advanced radar detection schemes under mismatched signal models}.\hskip 1em plus 0.5em minus 0.4em\relax Springer Nature, 2022.

\bibitem{anderson1984introduction}
T.~Anderson, ``An introduction to multivariate statistical analysis, 2 wiley, new york,'' \emph{TW Anderson}, vol.~25, pp. 643--665, 1984.

\bibitem{Kelly1981FiniteSumEF}
\BIBentryALTinterwordspacing
E.~J. Kelly, ``Finite-sum expressions for signal detection probabilities,'' 1981. [Online]. Available: \url{https://api.semanticscholar.org/CorpusID:119057852}
\BIBentrySTDinterwordspacing

\bibitem{gu2012robust}
Y.~Gu and A.~Leshem, ``Robust adaptive beamforming based on interference covariance matrix reconstruction and steering vector estimation,'' \emph{IEEE Transactions on Signal Processing}, vol.~60, no.~7, pp. 3881--3885, 2012.

\bibitem{sun2021robust}
S.~Sun, J.~Liu, W.~Liu, and T.~Jian, ``Robust detection of distributed targets based on rao test and wald test,'' \emph{Signal Processing}, vol. 180, p. 107801, 2021.

\bibitem{bose1996adaptive}
S.~Bose and A.~O. Steinhardt, ``Adaptive array detection of uncertain rank one waveforms,'' \emph{IEEE Transactions on Signal Processing}, vol.~44, no.~11, pp. 2801--2809, 1996.

\bibitem{liu2014adaptive}
W.~Liu, W.~Xie, J.~Liu, and Y.~Wang, ``Adaptive double subspace signal detection in gaussian background—part ii: Partially homogeneous environments,'' \emph{IEEE Transactions on Signal Processing}, vol.~62, no.~9, pp. 2358--2369, 2014.

\end{thebibliography}
	
	%

	
	
	
	
	

\end{document}